\newtheorem{corollary}{Corollary}
\newtheorem*{definition}{Definition}
\newtheorem*{illustration}{Illustration}
\newtheorem{proposition}{Proposition}
\newtheorem{remark}{Remark}
\newcommand{\dom}{\textsf{dom}}
\newcommand{\ifif}{\Leftrightarrow}
\newcommand{\ran}{\textsf{ran}}
\newcommand{\RR}{\mathbb{R}}
\newcommand{\QQ}{\mathbb{Q}}
\newcommand{\N}{\mathbb{N}}
\newcommand{\Z}{\mathbb{Z}}
\begin{document}

\title{How rare are the properties of binary relations?\thanks{We sincerely thank Vicki Knoblauch for a careful reading of an earlier version of this manuscript.
Her constructive remarks and suggestions have helped us improve the content and exposition of this paper.}}
\author{Ram Sewak Dubey\thanks{%
Department of Economics, Feliciano School of Business, Montclair State
University, Montclair, NJ 07043, USA; E-mail: dubeyr@montclair.edu} \and 
Giorgio Laguzzi\thanks{%
University of Freiburg in the Mathematical Logic Group at Eckerstr. 1, 79104
Freiburg im Breisgau, Germany; Email: giorgio.laguzzi@libero.it}}

\date{\today}
\maketitle
\begin{abstract}
\citet{knoblauch2014} and \citet{knoblauch2015} investigate the relative size of the collection of binary relations with desirable features  as compared to the set of all binary relations using symmetric difference metric (Cantor) topology and Hausdorff metric topology.
We consider Ellentuck and doughnut topologies  to further this line of investigation.
We report the differences among the size of the useful binary relations in Cantor, Ellentuck and doughnut topologies.
It turns out that the doughnut topology admits binary relations with more general properties in contrast to the other two.
We further prove that among the induced Cantor and  Ellentuck topologies, the latter captures the relative size of partial orders among the collection of all quasi-orders.
Finally we show that the class of ethical binary relations is small in Ellentuck (and therefore in Cantor) topology but is not small in doughnut topology.
In essence, the Ellentuck topology fares better compared to Cantor topology in capturing the relative size of collections of binary relations.

\noindent \emph{Keywords:} \texttt{Cantor topology,}\; \texttt{Doughnut topology,}\;\texttt{Ellentuck topology,}\; \texttt{Ethical social welfare relation.} 

\noindent \emph{Journal of Economic Literature} Classification Numbers: 
\texttt{C65,} \texttt{D63,}\; \texttt{D71.}
\end{abstract}

\newpage

\section{Introduction}\label{s1}

This paper contributes to the research agenda introduced in recent papers, \citet{knoblauch2014} and \citet{knoblauch2015}, which deals with the study of some context free features of binary relations.
Given a binary relation, it would be interesting to know how likely it is that the binary relation is transitive or complete.
Usual investigation would rely on the context in which the binary relation has been defined.%
\footnote{For example, \citet{boudreau2018} analyze the likelihood of the preference known as rank-sum scoring violating transitivity using combinatorial techniques.} 
In contrast, \citet{knoblauch2014} explores such questions in abstract (context-free) settings.
Similar studies have been reported in \citet{klaska1997}, where the number of all possible transitive binary relation on finitely many alternatives have been determined.%
\footnote{The prevalence of context free individual preferences has also been used earlier in \citet{balasko1997} and \citet{tovey1997} to examine the super majority required to avoid Condorcet cycles.}

A brief review of relevant results is in order.
Following notation is standard in this literature.
Let $X= \{x_1, x_2, \cdots, x_n\}$ denote the set of all alternatives, having $n$ elements, i.e.,  $|X|=n$.
The Cartesian product $X\times X$ contains $n^2$ elements.
An arbitrary binary relation is an element of the power set of $X\times X$. 
Thus the number of all possible binary relations on $X\times X$ is $2^{n^2}$.
In this framework, \citet{klaska1997} considers every binary relations as equally likely and shows that the number of transitive binary relations tends to $2^{\frac{n^2}{4}}$.
Once can therefore infer that the transitive binary relations are relatively rare as a fraction of all binary relations as $n$ becomes infinitely large.
Enumeration of some of the common binary relations is listed in the Table \ref{T1} below.
Column (1) describes the property satisfied by the  binary relations  and column (2) quantifies all possible binary relation. 
The third column contains link to the related integer sequences available on the \href{https://oeis.org}{Online Encyclopedia of Integer Sequences},  which  is a very useful and freely available online resource.%
\footnote{In addition, the number of all possible asymmetric  and irreflexive binary relations are  $3^{\frac{n^2-n}{2}}$ and $2^{n^2-n}$ respectively. 
Note that there are other important class of binary relations  whose numerosity cannot be described as a  closed form function of the cardinality of the set of alternatives $n:=|X|$.
The collection of all possible partial orders, quasi-orders and equivalence classes belong to this category.
However, the estimates on the number of  all possible  partial orders, quasi-orders and equivalence classes  are available at the integer sequences \href{https://oeis.org/A001035}{A001035}, \href{https://oeis.org/A000798}{A000798} and  \href{https://oeis.org/A000110}{A000110} respectively.}

\begin{table}[ht]
\centering
\begin{tabular}{|c|c|c|}
\hline 
\textbf{Property}&	Number& \href{https://oeis.org}{OEIS}\\ \hline\hline
(1)&(2)&(3)\\ \hline
All binary relations& $2^{n^2}$ &\href{https://oeis.org/A002416}{A002416}\\\hline
Transitivity&  (asymptotic) $2^{\frac{n^2}{4}}$& \href{https://oeis.org/A006905}{A006905}\\\hline
Antisymmetry& $2^n3^{\frac{n^2-n}{2}}$&\href{https://oeis.org/A083667}{A083667}\\\hline
Symmetry& $2^{\frac{n^2+n}{2}}$&\href{https://oeis.org/A006125}{A006125}\\\hline
Reflexivity&	$2^{n^2-n}$&\href{https://oeis.org/A053763}{A053763}\\\hline
\end{tabular}
\caption{Number of binary relations satisfying basic properties in the finite $n$ element case} \label{T1}
\end{table}

In a recent paper, \citet{knoblauch2014} considers binary relations on $X$  containing countably infinitely many alternatives.
She examines the prevalence (or lack thereof) of  transitive (complete, asymmetric, antisymmetric, linear, etc.) binary relations among the collection of all binary relations.
It is easy to note that techniques applied in the finite case would no longer be informative in this situation since both  the transitive (complete, asymmetric, antisymmetric, linear, etc.) binary relations and all binary relations turn out to be infinite with their ratio possibly indeterminate.
\citet{knoblauch2014} devises an alternative measure for this investigation.
She defines a symmetric difference metric topology on the collection of all binary relations on the set  $X\times X$.%
\footnote{Symmetric difference metric on the collection all of binary relations on a \emph{finite} set was introduced by \citet{kemeny1962}.}
She shows that the collection of transitive (or asymmetric, antisymmetric, complete, linear etc.) relations has a dense open complement in the symmetric difference metric topology.%
\footnote{In the social choice literature, metric on preference profiles on infinite set of alternatives have been defined earlier in \citet{chichilnisky1980}.
She considers a metric on \emph{smooth} preference profiles on a manifold to study the social choices.
We note that the metric spaces considered in  \citet{knoblauch2014} and \citet{knoblauch2015} do not require the preference profiles to be smooth.
In this sense, the setting is more general compared to \citet{chichilnisky1980}.
We adopt the general setting for our investigation.}

In this sense  the collection of binary relations having desirable properties are \emph{rare} among the set of all binary relations.
Further, the rarity feature continues to hold when the nowhere dense sets with respect to symmetric difference metric topology are replaced by the notion of measure zero set with respect to Lebesgue measure.
In \citet{knoblauch2015} the analysis is  extended to the case of uncountably infinite $X$ and to the Hausdorff metric topology.%
\footnote{The Hausdorff metric measures the distance between two binary relations is the largest distance from an ordered pair in one binary relation to the other binary relation. 
In contrast, in the symmetric difference metric, the distance between two binary relations, $A$ and $B$, is the number of disagreements, i.e., the size of the disagreement set $A\setminus B$ and $B\setminus A$.}
The choice of symmetric difference metric and Hausdorff metric enables her to show that the rarity of any property of the binary relation depends on the topology at hand.
In particular, a collection%
\footnote{of equivalence classes of Lebesgue measurable binary relations on $[0,1]$}
containing asymmetric (antisymmetric, transitive) binary relations is shown to be rare in the symmetric difference metric topology but not in the Hausdorff metric topology.
The results in \citet{knoblauch2015} also shed light on the properties of binary relations which are rare in both topologies.%
\footnote{The author describes it as  there is \emph{agreement} between the two topologies.
The situation when the rarity property holds in one but not the other is termed as \emph{disagreement} between the two topologies.}

In this paper we consider the symmetric difference metric topology on the set $S:= \{0,1\}^{\N}$ from \citet{knoblauch2014} and call it the \emph{Cantor topology}.
We assume the set of alternatives $X$ to be countably infinite which implies that the cardinality of the set $X\times X$ is also countably infinite. 
This allows us to describe any arbitrary binary relation via a coding set which is an element of $S:= \{0,1\}^{\N}=2^\N$.
This slight variation of the definition is useful to then approach two other topological concepts we are going to deal with.%
\footnote{It is well-known and straightforward to check that Cantor topology and symmetric different topology coincide.}  
The result (based on symmetric difference metric topology) in \citet{knoblauch2014} could be re-phrased as follows.
The coding set of binary relations satisfying some well-known \emph{basic} properties (e.g., transitivity, completeness, asymmetry, antisymmetry, linearity) are nowhere dense subsets of $2^\N$ with respect to Cantor topology and they are null sets with respect to Lebesgue measure.

Relying on the coding set, we define Ellentuck and doughnut topology to further explore the rarity of the binary relations satisfying the basic properties.
Among the three topologies, i.e., Cantor, Ellentuck and doughnut, the doughnut topology is the finest followed by Ellentuck and Cantor which is the coarsest.
Though the transitive, asymmetric or antisymmetric binary relations have been proven to be rare in Cantor topology in \citet{knoblauch2014}, we show in Proposition \ref{P1} that they are not rare in the Ellentuck topology.
Proposition \ref{P1} also demonstrates that the irreflexive binary relations are not rare in Ellentuck topology.
We next examine complete, reflexive or symmetric binary relations, which are rare in Cantor topology as established in \citet{knoblauch2014}.
We present two results in Proposition \ref{P2} on these binary relations.
It turns out that complete, reflexive or symmetric binary relations are rare in Ellentuck topology.
However they are not rare in the doughnut topology, the finest one considered in this paper.
Corollary \ref{C1} based on  the two propositions proves that the linear binary relations are not rare in doughnut topology while they are rare in Cantor topology (\citet{knoblauch2014}) and Ellentuck topology (follows from Proposition \ref{P2} in this paper).

We next examine the relative size of the collection of all quasi-orders in the collection of all transitive binary relations. 
Asymptotic results from the finite $X$ case show that the fraction of all transitive orders which are  quasi-orders  tends to zero.
In Proposition \ref{P3}, we show that the collection of quasi-orders are rare in the relative Ellentuck topology induced by the collection of all transitive binary relations.
Combinatorial results from the finite $X$ case also verify that the share of quasi-orders which are partial orders approaches one in the limit.
We present two results in the topological setting.
In Proposition \ref{P4} we prove that  the partial orders are rare among the collection of quasi-orders in the induced Cantor topology.
However, the partial orders are not rare among the collection of quasi-orders in the induced Ellentuck topology as we show in Proposition \ref{P5}.
In this framework, Ellentuck topology seems to be fine enough to capture the intuition from the asymptotic analysis to the abstract infinite setting whereas the Cantor topology is quite coarse for such investigations.

The second class of binary relations examined in this paper are those satisfying equity or efficiency properties from the social choice literature. 
The equity principles can be classified in two broad categories, procedural and consequentialist.
A brief description of the anonymity axiom (a version of procedural equity) and the strong equity axiom (an example of consequentialist equity) is as follows.

\citet{ramsey1928} while devising the idea of inter-generational equity, observed  that discounting one generation’s utility relative to another's is \enquote{ethically indefensible}, and something that \enquote{arises merely from the weakness of the imagination}. 
\citet{diamond1965} formalized the concept of \enquote{equal treatment} of all generations (present and future) in the form of an \emph{anonymity} axiom on social preferences.
Anonymity requires that society should be indifferent between two infinite streams of well-being, if one is obtained from the other by interchanging the levels of well-being of any two generations.
It is an example of procedural equity principle; i.e., the change involved in the pair of utility streams being ranked does not alter the distribution of utilities.

Equity principles which require an alteration in the distribution of utilities are called consequentialist equity concepts.
The equity notions examined in this paper is the \emph{strong} equity (SE) axiom.%
\footnote{Originally the term \enquote{strong equity} has been used relative to the axiom introduced by \citet{hammond1976}, which he called the \enquote{Equity Axiom}.
Hammond explains the axiom to be in the spirit of the \enquote{Weak Equity Axiom} of \citet{sen1973}
Strong equity was introduced by \citet{aspremont1977}, who referred to it as an \enquote{Extremist Equity Axiom}.}  
Strong Equity compares two infinite utility streams ($x$ and $y$) in which all generations except two have the same utility levels in both utility streams. 
Regarding the two remaining generations (say, $i$ and $j$), one of the generations (say $i$) is better off in utility stream $x$, and the other generation ($j$) is better off in utility stream $y$, thereby leading to a situation of a conflict. 
Strong equity requires that if for both utility streams, it is generation $i$ which is worse off than generation $j$, then generation $i$ should choose (on behalf of the society) between $x$ and $y$. 
Thus, utility stream $x$ is socially preferred to $y$, since generation $i$ is better off in $x$ than in $y$.

The efficiency principle considered here is based on a broad consensus among scholars as a desirable attribute that social binary relations preferences should possess, namely the Pareto criteria. 
The strongest version of the Pareto principle asserts that one utility stream must be deemed strictly preferred compared to another if at least one generation is better off and no generation is worse off in the former compared to the latter. 

We retain the cardinality of $X\times X$ as countably infinite and show in Propositions \ref{P6}-\ref{P8} that the binary relations satisfying each of these axioms are not rare in doughnut topology whereas they are rare in Ellentuck (and similarly in Cantor) topology.

To summarize, we have used the Ellentuck and Doughnut small sets to provide interesting insights on the notion of rarity of  the class of binary relations endowed with basic features or desirable equity/efficiency axioms when  the cardinality of the set $X\times X$ is countably infinite.

Remainder of the paper is organized as follows.
We introduce notation and the definitions in section \ref{s2}.
The binary relations satisfying basic properties are examined in section \ref{s3} and \ref{s4}.
Section \ref{s5} deals with the binary relations satisfying equity or efficiency axioms.
We conclude in section \ref{s6}.

\section{Preliminaries}\label{s2}
\subsection{Notation}\label{s21}
Let $\RR$, $\RR_+$ $\N$, $\QQ$, $\Z$ be the sets of real numbers, non-negative real numbers, natural numbers, rational numbers and integers respectively. 
For all $y, z\,\in \RR^{\N}\,$, we write $y\geq z$ if $y_{n}\geq z_{n}$, for all $n\in \N$; we write $y>z$ if $y\geq z$ and $y\neq z;$ and we write $y\gg z$ if $y_{n}>z_{n}$ for all $n\in \N$.

\subsection{Binary relations}
A \emph{binary relation} on a set $X$ is a subset of $X\times X$.
Preference over set $X$ could be considered as a binary relation $\Re$ on $X$ as follows.
$x$ is preferred to $y$ if and only if $(x, y)\in \Re$.
A binary relation $\Re$ on a set $X$ is 
\begin{itemize}
\item[]{\emph{reflexive} if $(x, x)\in \Re$, for all $x \in X$,}
\item[]{\emph{irreflexive} if $(x, x)\notin \Re$, for all $x \in X$,}
\item[]{\emph{complete} if $(x, y)\in \Re$ or $(y, x)\in \Re$, for all $x, y\in X$,}
\item[]{\emph{transitive} if $(x, y), (y, z)\in \Re$ implies $(x, z)\in \Re$, for all $x, y, z\in X$,}
\item[]{\emph{symmetric} if $(x, y)\in \Re$ implies $(y, x)\in \Re$, for all $x, y\in X$,}
\item[]{\emph{asymmetric} if $(x, y)\in \Re$ implies $(y, x)\notin \Re$, for all $x, y\in X$,}
\item[]{\emph{antisymmetric} if $(x, y), (y, x)\in \Re$ implies $x=y$, for all $x, y\in X$,}
\item[]{\emph{quasi-order} if $\Re$ is transitive and reflexive,}
\item[]{\emph{partial order} if $\Re$ is transitive, reflexive and antisymmetric,}
\item[]{\emph{equivalence relation} if $\Re$ is transitive, reflexive and symmetric,}
\item[]{\emph{linear order} if $\Re$ is complete, transitive and antisymmetric.}
\end{itemize}
\subsection{Coding set}
Let $X$ be a countably infinite set of alternatives (utility streams) and let $\{x_n: n \in \N\}$ be the  enumeration of  all elements in $X$.
Let $\{q_k: k \in \N \}$ enumerate all pairs in $X \times X$. 
A binary relations $\Re$ on $X$ can be then coded/seen as a subset of $\N$ by collecting those indices $k \in \N$ for which the corresponding pair $q_k \in \Re$. 
Moreover, by standard identification of a subset with its characteristic function one can think of  $\{0, 1\}^{\N}$ (also denoted by $2^\N$)%
\footnote{We use both the notations interchangeably throughout the paper.} as the set of all codes of binary relations on $X$. 
We denote the binary sequence in $2^\N$ coding the binary relation $\Re \subseteq X \times X$ by $z_{\Re}$. 
The notation $\Re_z$ is used for the binary relation coded by $z \in 2^\N$.
Following examples would be helpful to clarify the notations.

\begin{illustration}
\emph{\textbf{Finite case}:
Let the set $X$ containing all alternatives be finite, with the cardinality $|X|=n$.
Then the number of all possible binary relations is $2^{\left(n^2\right)}$.
Consider $X:=\{x_1, x_2, x_3\}$, i.e., $n=3$. 
Enumeration of all pairs in $X\times X$ is $q_1=(x_1, x_1)$, $q_2=(x_1, x_2)$, $\cdots$, $q_9=(x_3, x_3)$.
There are $2^{3^2} = 2^9$ possible binary relations.
A binary relation $\Re$ could be the collection $\{q_1, q_5, q_9\}$ (the reflexive binary relation).
Then $z_{\Re} = \{1, 0, 0, 0, 1, 0, 0, 0, 1\}$.
Also $\Re^{\prime}_{z^{\prime}}$ for $z^{\prime}=\{1, 1, 1, 0, 0, 0, 0, 0, 0\}$ is the binary relation $\Re^{\prime}= \{(x_1, x_1), (x_1, x_2), (x_1, x_3)\}=\{q_1, q_2, q_3\}$.}

\emph{\textbf{Countably infinite case}:
Let the set $X= \{x_1, x_2, x_3, \cdots\}$ contain countably infinitely many elements.
The number of all possible pairs of elements in set $X\times X$ is countably infinite as well, with the enumeration  denoted by $q_k$, $k\in \N$.
We fix this enumeration of the pairs and define binary relations as a sequence $z\in \{0,1\}^{\N}$.
Thus, $z= \{1, 1, \cdots\}$ describes a binary relation containing all pairs of alternatives, i.e., $\Re_{z} = \{(x_i, x_j): x_i, x_j\in X\; \forall i, j\in \N\}$.
If $q_1=\left(x_1, x_3\right)$, then $z^{\prime}= \{0, 1, 1, \cdots\}$ describes a binary relation containing all pairs of alternatives except $\left(x_1, x_3\right)$ i.e., $\Re_{z^{\prime}} = \Re_{z} \setminus \left(x_1, x_3\right)$. }
\end{illustration}

Let $p$ be a property of binary relation.
We define $C(p)$ as the collection of all codes of binary relations satisfying property $p$, i.e. $C(p):= \{z_\Re \in 2^\N: \Re\; \text{satisfies}\; p \}$. 
Thus, $C(p)$ is the \emph{coding set} of the property $p$. 
For instance, if $p$ is the property \emph{transitivity}, then $C(p)$ is the subset of $2^\N$ collecting all codes of all transitive binary relations.
Ideally, one expects that the useful properties (transitivity, asymmetry, etc.) to be ubiquitous. 
However, recent papers, (see \citet{knoblauch2014}, \citet{knoblauch2015})  have shown that these properties turn out to be quite rare in appealing topologies.
In order to formalize this kind of questions is to evaluate how \emph{small} is the corresponding coding set $C(p)$. 
\subsection{Topological space}
A \emph{topological space} $(X, \tau)$ is a set $X$ together with a topology $\tau$, a collection of subsets of $X$ such that 
\begin{itemize}
\item{$X\in \tau$, and $\emptyset\notin \tau$,}
\item{every union of members of $\tau$ is a member of $\tau$,}
\item{every finite intersection of members of $\tau$ is a member of $\tau$.}
\end{itemize}
The members of $\tau$ are called the \emph{open sets}, and their complements are called the \emph{closed sets}.
A \emph{basis} $B$ for the topological space $(X, \tau)$ is a subcollection of $\tau$ such that $p\in O\in \tau$ implies $p\in b\subseteq O$ for some $b\in B$.
A set $S\subseteq X$ is \emph{dense} if $O\in \tau$ and $O\neq \emptyset$ together imply $O\cap S\neq \emptyset$.
A set $A \subseteq X$ is $\tau$-\emph{nowhere dense} if and only if the interior of the closure of $A$ is empty.
Given the topological space $(X\times X, \tau)$ and a subset $S\subseteq X\times X$, we define the induced (or relative) topology, $\tau_S$ as follows.
\[
\tau_S = \{S\cap B: B\in \tau\}.
\]

\subsection{Ideal}

The intuitive idea of relative size of a collection of subsets with respect to the entire collection of subsets (the power set $\mathcal{P}(S):=\{A: A \subseteq S\}$) of any given set $S$ is via the notion of  \emph{ideal}.
Elements of ideal are considered \emph{small} subsets of $S$.
Definition of ideal is as follows.
\begin{definition}
\emph{Ideal: 
Given a set $S$ and $I \subseteq \mathcal{P}(S)$, we say that $I$ is an ideal if and only if  the following three properties hold.
\begin{itemize}
\item $\emptyset \in I$ and $S \notin I$,
\item $\forall A, B \in I$, $A \cup B \in I$, and
\item $\forall A \in I\; \forall B \in \mathcal{P}(S)$, if $B \subseteq A$ then $B \in I$.
\end{itemize}}
\end{definition}

For instance, if $S$ is endowed with a topology $\tau$, a well-established notion is the ideal of $\tau$-nowhere dense subsets of $S$. 
Our objective is to analyze various notions of \emph{smallness} of a collection of subsets of a set so as to extend the investigation on the rarity of  properties of binary relations. 
We employ two notions of smallness (borrowed from the descriptive set theory), namely the ideal of \emph{Ramsey} null sets (also called \emph{Ellentuck} nowhere dense sets) and the ideal of \emph{doughnut} null sets.
Following definitions are needed in order to capture these notions.

\begin{definition}\emph{Ideal $\mathcal{I}_U$ of \emph{$U$-small subsets} of $2^\N$:
Let $U$ be a non-empty collection of subsets of $2^\N$ (i.e. $U \subseteq \mathcal{P}(2^\N)$) such that:
\begin{itemize}
\item for all $u \in U$ there exists non-empty $u^{\prime} \subseteq u$ such that  $u^{\prime} \in U$, and
\item for all $x \in 2^\N$ there exists $u \in U$ such that $x \in u$. 
\end{itemize}
Set $X \in \mathcal{I}_U$ if and only if for every $u \in U$ there exists  non-empty $u^{\prime} \in U$, $u^{\prime} \subseteq u$ such that $u^{\prime} \cap X = \emptyset$. }
\end{definition}

\begin{remark}
\emph{Note that the notion of $U$-small subsets generalizes the concepts of nowhere dense and Lebesgue null sets. 
Indeed if $U$ is the collection of all open sets with respect to the Cantor topology, then $\mathcal{I}_U$ is exactly the ideal of nowhere dense sets with respect to the  Cantor topology.
It is easy to check that $\mathcal{I}_U$ is the ideal of Lebesgue measure zero sets when we consider $U$ to be the collection of all closed subsets of $2^\N$ with positive Lebesgue measure.}
\end{remark}

\subsection{Cantor, Ellentuck and doughnut collections}

A \emph{partial function} $f: X \rightarrow Y$ is a function from a subset $S$ of $X$ to $Y$. 
If $S$ equals $X$, the partial function is said to be total.
Domain and range of function $f$ are denoted by $\dom(f)$ and $\ran(f)$ respectively.
\begin{definition}
Let $f: \N \rightarrow \{0, 1\}$ be a partial function and define $N_f:=\{x \in 2^\N: \forall n \in \dom(f)(x(n)=f(n))\}$.
\begin{itemize}
\item[] \emph{Cantor} collection $\gamma$: It  consists of all $N_f$ such that $\dom(f)$ is finite.
\item[] \emph{Ellentuck} collection $\varepsilon$: It  consists of all $N_f$ such that $\dom(f)$ and $\N \setminus\dom(f)$ are both infinite and there exists $k \in \N$ for all $n \in \dom(f) (n \geq k \Rightarrow f(n)=0)$.%
\footnote{See \citet[p. 524, Definition 25.26]{jech2003} for a textbook definition of the Ellentuck topology. 
\citet{brendle2005} defines and investigates the properties of doughnut topology.} 
\item[] \emph{Doughnut} collection $\delta$: It  consists of all $N_f$ such that $\dom(f)$ and $\N \setminus \dom(f)$ are both infinite.
\end{itemize}
\end{definition}

It is straightforward to note that these three collections are nested with $\delta$ the finest, $\gamma$ the coarsest and $\varepsilon$ lying in the middle, i.e., $\gamma \subseteq \varepsilon \subseteq \delta$. 
Further, the Cantor topology is metrizable (i.e., it is first countable) whereas both Ellentuck and doughnut topologies are non-metrizable.
In the literature, the sets in $\mathcal{I}_\varepsilon$ are also called \emph{Ramsey} null (or \emph{Ellentuck} nowhere dense) and the sets in $\mathcal{I}_\delta$ are also called \emph{Doughnut} null.

\subsection{Equity and Pareto principles}
We will be dealing with following equity and Pareto principles in this paper.
The anonymity (also called finite anonymity) axiom is a notion of procedural equity.
Strong equity belongs to the class of consequentialist equity.
The efficiency notion we use is the standard Pareto principle. 
Definitions and formal notations are as below.
\begin{definition} 
Let $\Re \subseteq X \times X$ be a binary relation.  
\begin{itemize}
\item[]\emph{Anonymity:} $\Re$ is said to be \emph{anonymous} if and only if for all $\mathbf{t}, \mathbf{t}^{\prime} \in X=Y^N$ there are $i, j \leq N$, $\mathbf{t}(j) = \mathbf{t}^{\prime}(i)$ and $\mathbf{t}(i)=\mathbf{t}^{\prime}(j)$ and for all $k \neq i,j$, $\mathbf{t}(k)=\mathbf{t}^{\prime}(k)$, then  $(\mathbf{t}^{\prime}, \mathbf{t})\in \Re$ and $(\mathbf{t},  \mathbf{t}^{\prime})\in \Re$ hold, i.e., ($\mathbf{t}\sim_a \mathbf{t}^{\prime}$).
\[
\mathbf{t} \sim_a \mathbf{t}^{\prime} \ifif \exists i, j \leq N ( \mathbf{t}(j) = \mathbf{t}^{\prime}(i) \land \mathbf{t}(i)= \mathbf{t}^{\prime}(j) \land \forall k \neq i, j (\mathbf{t}(k) =\mathbf{t}^{\prime}(k))). 
\]
\item[]\emph{Strong equity:} $\Re$ is said to satisfy  \emph{strong equity} if and only if for all $\mathbf{t}, \mathbf{t}^{\prime} \in Y^N$ there exist $i, j \leq N$ such that $\mathbf{t}(i) < \mathbf{t}^{\prime}(i) < \mathbf{t}^{\prime}(j) < \mathbf{t}(j)$ and for all $k \neq i,j$, $\mathbf{t}(k)=\mathbf{t}^{\prime}(k)$, then  $(\mathbf{t}^{\prime}, \mathbf{t})\in \Re$ and $(\mathbf{t},  \mathbf{t}^{\prime})\notin \Re$, i.e., ($\mathbf{t}<_s \mathbf{t}^{\prime}$).
\[
\mathbf{t} <_s \mathbf{t}^{\prime}\ifif \exists i,j \leq N (\mathbf{t}(i) < \mathbf{t}^{\prime}(i) < \mathbf{t}^{\prime}(j) < \mathbf{t}(j)) \land \forall k \neq i,j (\mathbf{t}(k)=\mathbf{t}^{\prime}(k)).
\]
\item[]\emph{Pareto principle:} $\Re$ is said to be \emph{Paretian} if and only if for all $\mathbf{t}, \mathbf{t}^{\prime} \in Y^N$ for all  $i \leq N$, $\mathbf{t}(i) \leq \mathbf{t}^{\prime}(i)$ and there exists $i \leq N$ such that $\mathbf{t}(i) < \mathbf{t}^{\prime}(i)$, then $(\mathbf{t}^{\prime}, \mathbf{t})\in \Re$ and $(\mathbf{t},  \mathbf{t}^{\prime})\notin \Re$, i.e., ($\mathbf{t}<_p \mathbf{t}^{\prime}$)
\[
\mathbf{t} <_p \mathbf{t}^{\prime} \ifif \forall i \leq N (\mathbf{t}(i) \leq \mathbf{t}^{\prime}(i)) \land \exists i \leq N (\mathbf{t}(i) < \mathbf{t}^{\prime}(i)). 
\]
\end{itemize}
\end{definition}

\section{Basic properties of binary relations} \label{s3}

In economic theory social welfare relations and preference relations satisfy some properties, which we may a
priori split into two categories: \emph{basic} properties, such as reflexivity, irreflexivity, symmetry, asymmetry and transitivity; \emph{economic} property, such as Pareto, anonymity, strong equity principles. 
In this section we consider the basic properties of binary relations.
Following partition of $\N$ would be useful in the proofs.
Let 
\begin{equation}\label{E1}
R:= \{n_k \in \N: \;\text{there exists}\; k\in\N, \;\text{such that}\; q_{n_k}= (x_k, x_k)\}.
\end{equation}
Set $R$ is both infinite and co-infinite and collects the enumeration of all pairs in $X\times X$ with identical elements chosen in the pair (i.e., the reflexive part of any binary relation).
Note that for all $n\in \N\setminus R$, $q_n = (x_j, x_m)$ with $x_j\neq x_m$.
Next, let 
\begin{equation}\label{E1a}
n_1 \in \min\{n: n\in\N\setminus R\}\;\text{with}\; q_{n_1}= (x_{j_1}, x_{m_1}), \;\text{and}
\end{equation}
\begin{equation}\label{E1b}
n^{\prime}_1 \in (\N\setminus R) \setminus \{n_1\}, \;\text{such that}\; q_{n^{\prime}_1}= (x_{m_1}, x_{j_1}).
\end{equation}
Having defined $(n_1, n^{\prime}_1)$,$\cdots$,  $(n_{k-1}, n^{\prime}_{k-1})$, let
\begin{equation}\label{E1c}
n_k \in \min\{n: n\in (\N\setminus R) \setminus \{n_1, n^{\prime}_1, \cdots, n_{k-1}, n^{\prime}_{k-1}\}\;\text{with}\; q_{n_k}= (x_{j_k}, x_{m_k}), \;\text{and}
\end{equation}
\begin{equation}\label{E1d}
n^{\prime}_k\in(\N\setminus R) \setminus \{n_1, n^{\prime}_1, \cdots, n_{k-1}, n^{\prime}_{k-1}, n_k\}, \;\text{such that}\; q_{n^{\prime}_k}= (x_{m_k}, x_{j_k}).
\end{equation}
Denote the disjoint subsets of $\N$ recursively defined in (\ref{E1c}) and (\ref{E1d}) as $A$ and $B$ respectively, i.e.,
\begin{equation}\label{E1e}
A:= \{n_k: k\in \N\}, \;\text{and}\; B:= \{n^{\prime}_k: k\in \N\}.
\end{equation}
The set $\N\setminus R$ has been partitioned in the sets $A$ and $B$ in the following manner.
First (minimum) and each subsequent element of $B$ lists the same pair of distinct elements but in reverse order as the first (minimum) and the corresponding subsequent  element of $A$.
Also let 
\begin{equation}\label{E1f}
\Gamma:= \{(n_k, n^{\prime}_k): n_k\in A, \;\text{and}\; n^{\prime}_k\in B, k\in \N\}.
\end{equation}
Set $\Gamma$ is a sequential listing of elements in $A\times B$.

\begin{proposition} \label{P1}
Let $\mathbf{T}$, $\mathbf{IR}$, $\mathbf{A}$, and $\mathbf{AS} \subseteq 2^\N$ be the coding sets of all transitive, irreflexive, asymmetric and antisymmetric  binary relations respectively.
Then there exists $N_f$, $N_{f^{\prime}}$, $N_g$, $N_{h} \in \varepsilon$ such that $N_f \subseteq \mathbf{T}$, $N_{f^{\prime}} \subseteq \mathbf{IR}$, $N_g \subseteq \mathbf{A}$ and $N_{h} \subseteq \mathbf{AS}$. 
In particular, $\mathbf{T}$, $\mathbf{IR}$, $\mathbf{A}$, and $\mathbf{AS}$ are not $\varepsilon$-small.
\end{proposition}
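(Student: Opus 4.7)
The plan is, for each of the four properties, to exhibit explicitly a partial function $f \colon \N \to \{0,1\}$ which is identically zero on its domain, whose domain and complement in $\N$ are both infinite, so that $N_f$ belongs to $\varepsilon$ (the eventually-zero clause is then trivially satisfied with $k=0$). The general recipe: identify a ``free'' set $F \subseteq \N$ such that any $z \in 2^\N$ vanishing off $F$ automatically codes a binary relation with the desired property, and then set $\dom(f) = \N \setminus F$, $f \equiv 0$. Every $z \in N_f$ then satisfies $\Re_z \subseteq \{q_n : n \in F\}$, placing $N_f$ inside the relevant coding set and witnessing that the coding set is not $\varepsilon$-small (the condition in the definition of $\mathcal{I}_\varepsilon$ fails for the neighborhood $u := N_f$).

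For the first three properties, the partition of $\N$ into $R$, $A$, $B$ introduced just before the proposition hands us the right choice of $F$. For irreflexivity, take $F = \N \setminus R$, so that every diagonal index is forced to $0$ and no $(x,x)$ can appear. For asymmetry, take $F = B$ (so $\dom(g) = R \cup A$): diagonals are killed by $R$, and for each converse couple $(n_k, n'_k) \in \Gamma$ the index $n_k \in A$ is set to $0$, so the converse $(x_{j_k},x_{m_k})$ of every potentially live pair $(x_{m_k},x_{j_k}) = q_{n'_k}$ is excluded. Antisymmetry permits diagonals, so the looser choice $\dom(h) = A$, $F = R \cup B$, suffices by the same converse-pair argument. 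In all three cases the infiniteness of $\dom$ and its complement is immediate from the infiniteness and coinfiniteness of $R$ and from the fact that $A$ and $B$ are both infinite.

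The delicate case is transitivity, since it is a ternary constraint rather than a binary one. The trick is not to enforce transitivity positively but to arrange the free set $F$ so that no two live pairs can ever form a transitivity chain. A clean implementation is the ``out-star'' of $x_1$: let
\[
F := \{n \in \N : q_n = (x_1, x_k) \text{ for some } k \geq 2\},
\]
and set $\dom(f) = \N \setminus F$, $f \equiv 0$. Both $F$ and its complement are infinite ($R \subseteq \N \setminus F$, for instance), so $N_f \in \varepsilon$. For any $z \in N_f$, every pair in $\Re_z$ has first coordinate $x_1$ and second coordinate some $x_k$ with $k \geq 2$, so no two such pairs can share a middle element; the transitivity implication is vacuous, and hence $N_f \subseteq \mathbf{T}$. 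The main conceptual obstacle is precisely this step: recognizing that ``not $\varepsilon$-small'' can be established for transitivity by a purely combinatorial vacuous-chain construction, rather than by any positive closure argument. The remaining verifications in each of the four cases are routine checks that $\dom(f)$ and $\N \setminus \dom(f)$ are infinite and that $f$ is identically zero.
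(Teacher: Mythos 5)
Your proposal is correct and follows essentially the same strategy as the paper: for each property you freeze a suitable co-infinite set of coordinates to $0$ and observe that every relation coded in the resulting Ellentuck neighborhood (which is legitimately in $\varepsilon$ since $f \equiv 0$) has the property, and your witnesses for irreflexivity, asymmetry and antisymmetry coincide with the paper's up to swapping the roles of $A$ and $B$. The only point of divergence is the transitivity witness: the paper sets $\dom(f) = A \cup B$ with $f \equiv 0$ so that only diagonal pairs $(x_n,x_n)$ can survive and any transitivity chain collapses to a single point, whereas you freeze everything off the out-star $\{(x_1,x_k) : k \geq 2\}$ so that no chain can form at all; both choices render the transitivity hypothesis (essentially) vacuous and both are valid.
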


\begin{proof}

\begin{enumerate}
\item[]{\textbf{Transitivity}: 
It is sufficient to prove that there exists $N_f \in \varepsilon$ such that every $x \in N_f$ codes a transitive binary relation on $X$. 
Let $\dom(f) = A\cup B$ with $A, B\subset\N$ defined in (\ref{E1e}) and  let $f: \N \rightarrow 2$ be such that  for all $n \in \dom(f)$, $f(n)=0$. 
To show that every element in $N_f$ codes a transitive binary relation, pick $z \in N_f$ arbitrarily and let $\Re_z$ be the corresponding binary relation. 
Recall that transitivity is the following property: 
\[
\text{For all}\;  x, x^{\prime}, x^{\prime\prime}\in X, ((x, x^{\prime}) \in \Re_z \land (x^{\prime}, x^{\prime\prime}) \in \Re_z \Rightarrow (x, x^{\prime\prime}) \in \Re_z).
\]
 
As a consequence, $\Re_z$ trivially satisfies it because the left hand side of the implication never holds, since all pairs in $\Re_z$ are of the form $(x_n, x_n)$, unless $x=x^{\prime}=x^{\prime\prime}$ in which case the property trivially holds.
Therefore $N_f\subseteq \mathbf{T}$.}

\item[]{\textbf{Irreflexivity}: 
Let $f^{\prime}: \N \rightarrow 2$ be such that $\dom(f^{\prime})=R$ (where $R$ is defined in (\ref{E1})) and for all $n \in \dom(f^{\prime})$, $f(n)=0$. 
To show that every element in $N_f$ codes an irreflexive binary relation, pick $z \in N_{f^{\prime}}$ arbitrarily and let $\Re_z$ be the corresponding binary relation. 
Since $(x_{n_k}, x_{n_k})\notin \Re_z$ for all $k\in \N$ by construction $\Re_z$ is irreflexive.
Therefore $N_{f^{\prime}}\subseteq \mathbf{IR}$.}

\item[]{\textbf{Asymmetry}:
Define $g: \N \rightarrow 2$ so that $\dom(g):= B\cup R$ (where $R$ and $B$ are defined in (\ref{E1}) and (\ref{E1e}) respectively) and for all $i \in \dom(g)$, $g(i)=0$. 
Note that for every $z \in N_g$ we have three cases.
\begin{enumerate}
\item[(1)]{If $k \in B\subset \dom(g)$ then $z(k)=0$, which means the pair $(x_{m_k}, x_{j_k}) \notin \Re_z$.
So there is nothing to prove.}
\item[(2)]{If $k \in R \subset \dom(g)$ then $z(k)=0$, which means the pair $(x_{m_k}, x_{m_k}) \notin \Re_z$. 
Again there is nothing to prove.}
\item[(3)]{If $k \in A$ and $z(k)=1$, which means $(x_{j_k}, x_{m_k}) \in \Re_z$, we simply notice that $(x_{m_k}, x_{j_k}) \notin \Re_z$, by construction of $B$.
The case $k \in A$ and $z(k)=0$ is trivial as in case (1) above.}
\end{enumerate}
Hence $\Re_z$ satisfies asymmetry.
Therefore $N_g\subseteq \mathbf{A}$.}
\item[]{\textbf{Antisymmetry}:
Define $h: \N \rightarrow 2$ so that $\dom(h):= B$ (where $B$ is defined in (\ref{E1e})) and for all $i \in \dom(h)$, $h(i)=0$. 
Note that for every $z \in N_{h}$ we have three cases.
\begin{enumerate}
\item[(1)]{If $k \in B\subset \dom(h)$ then $z(k)=0$, which means the pair $(x_{m_k}, x_{j_k}) \notin \Re_z$.
So there is nothing to prove.}
\item[(2)]{If $k \in R$ and $z(k) =1$,  which means $(x_{m_k}, x_{m_k}) \in \Re_z$.
Again there is noting to prove. 
The case $k \in R$ and $z(k)=0$ is trivial as in case (1) above.}
\item[(3)]{If $k \in A$ and $z(k)=1$, which means $(x_{j_k}, x_{m_k}) \in \Re_z$, note that $(x_{m_k}, x_{j_k}) \notin \Re_z$, by construction of $B$.
The case $k \in A$ and $z(k)=0$ is trivial as in case (1) above.}
\end{enumerate}
Hence $\Re_z$ satisfies antisymmetry.
Therefore $N_{h}\subseteq \mathbf{AS}$.}
\end{enumerate}
\end{proof}
\begin{remark}
\emph{Proposition \ref{P1} distinguishes Ellentuck topology from Cantor topology. 
\citet{knoblauch2014} has shown that the binary relations satisfying basic properties are rare in Cantor topology.
In contrast, the transitive, asymmetric or antisymmetric  binary relations are not rare in Ellentuck topology.}
\end{remark}
 
\begin{proposition} \label{P2}
Let $\mathbf{C}$, $\mathbf{S}$, $\mathbf{R} \subseteq 2^\N$ be the coding sets of all complete, symmetric and reflexive binary relations, respectively. 
Then $\mathbf{C}$, $\mathbf{S}$, $\mathbf{R}$ are $\varepsilon$-small, but they contain open subsets in $\delta$, and so in particular, $\mathbf{C}$, $\mathbf{S}$, $\mathbf{R}$ are not $\delta$-small.
\end{proposition}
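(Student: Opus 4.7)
The plan is to treat the $\varepsilon$-smallness and $\delta$-openness halves separately, exploiting throughout that any $f$ with $N_f \in \varepsilon$ has only finitely many $1$'s (by the eventually-$0$ clause).

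For $\varepsilon$-smallness of $\mathbf{R}$, given $N_f \in \varepsilon$ I use that $R$ is infinite while $f^{-1}(1)$ is finite, so either some $n \in R \cap \dom(f)$ already has $f(n)=0$ (in which case $N_f \cap \mathbf{R} = \emptyset$), or else I pick $n_0 \in R \setminus \dom(f)$ and define $f' \supseteq f$ with $f'(n_0)=0$; then $N_{f'} \in \varepsilon$ is non-empty, contained in $N_f$, and every $z \in N_{f'}$ misses the diagonal pair indexed by $n_0$. For $\mathbf{C}$, completeness applied with $x=y$ forces reflexivity, so $\mathbf{C} \subseteq \mathbf{R}$ and monotonicity of $\mathcal{I}_\varepsilon$ transfers the conclusion. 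For $\mathbf{S}$, the plan is to locate a partition pair $(n_k, n_k') \in \Gamma$ with coordinates outside $\dom(f)$ and extend by setting $f'(n_k)=1$ and $f'(n_k')=0$; adding only one new $1$ preserves the eventually-$0$ clause, so $f' \in \varepsilon$, and every $z \in N_{f'}$ then carries the asymmetry witness $(x_{j_k},x_{m_k}) \in \Re_z$ together with $(x_{m_k},x_{j_k}) \notin \Re_z$.

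For the $\delta$-open half I would exhibit explicit $N_g \in \delta$ inside each coding set. For $\mathbf{R}$, take $\dom(g)=R$ and $g \equiv 1$: the complement $A \cup B$ is infinite, so $N_g \in \delta$, and every $z$ contains all diagonal pairs. For $\mathbf{C}$, take $\dom(g)=R \cup A$ with $g \equiv 1$: for any $B$-indexed off-diagonal pair, its reverse is indexed in $A \subseteq \dom(g)$ with $g$-value $1$, so it lies in $\Re_z$ and completeness holds; the complement $B$ is infinite, so $N_g \in \delta$. For $\mathbf{S}$, take $\dom(g)=A \cup B$ with $g \equiv 0$: then $\Re_z$ is a subset of the diagonal for every $z \in N_g$, trivially symmetric. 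In the $\mathbf{R}$ and $\mathbf{C}$ constructions $g \equiv 1$ on an infinite set violates the eventually-$0$ clause, so these $N_g$'s sit in $\delta \setminus \varepsilon$, as needed.

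The delicate step I expect to absorb most of the effort is the $\mathbf{S}$ case of $\varepsilon$-smallness: one must verify that for every $N_f \in \varepsilon$ the partition $\Gamma$ supplies a pair $(n_k,n_k')$ at which the extension can plant a forced $1/0$ witness, or, failing that, that existing $1$'s of $f$ at $A$-indices with free partners already suffice. Making this case analysis airtight, in particular handling $N_f$ whose domain largely covers $A \cup B$ while leaving enough of the $\Gamma$-pairing outside $\dom(f)$, is the main obstacle.
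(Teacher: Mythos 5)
Your handling of $\mathbf{R}$, of $\mathbf{C}$, and of the $\delta$-open half is sound and close to the paper's: the paper realizes the $\delta$-witnesses for $\mathbf{C}$ and $\mathbf{R}$ simultaneously via $\dom(f)=A\cup R$, $f\equiv 1$, and proves $\varepsilon$-smallness of $\mathbf{R}$ by exactly your two-case extension argument. Your reduction $\mathbf{C}\subseteq\mathbf{R}$ (completeness at $x=y$ forces reflexivity) plus downward closure of the ideal is in fact cleaner than the paper's direct two-case argument for $\mathbf{C}$, which tacitly needs the reversed index $k^{\prime}$ to lie past the eventually-zero threshold. The problem is the step you yourself flag as ``the main obstacle.''

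That obstacle cannot be overcome: the $\varepsilon$-smallness claim for $\mathbf{S}$ is false, and your own $\delta$-witness for $\mathbf{S}$ is the counterexample. Take $\dom(f)=A\cup B$ with $f\equiv 0$. Since $f$ takes no value $1$, the eventually-zero clause holds vacuously, so $N_f\in\varepsilon$ (this is precisely the set the paper uses in Proposition \ref{P1} to show $\mathbf{T}$ is not $\varepsilon$-small). Every $z\in N_f$ codes a relation contained in the diagonal, and any such relation is symmetric, so $N_f\subseteq\mathbf{S}$; no non-empty $N_g\subseteq N_f$ can then satisfy $N_g\cap\mathbf{S}=\emptyset$, so $\mathbf{S}\notin\mathcal{I}_\varepsilon$. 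This is exactly the case your planned analysis cannot handle: when $\dom(f)\supseteq A\cup B$ and $f(n_k)=f(n_k^{\prime})$ for every pair in $\Gamma$, there is neither a free pair on which to plant a $1/0$ witness nor an existing asymmetric witness, and indeed $N_f\subseteq\mathbf{S}$ outright. You noted that the $\mathbf{R}$ and $\mathbf{C}$ witnesses lie in $\delta\setminus\varepsilon$ because $g\equiv 1$ on an infinite set; had you asked the same question of the $\mathbf{S}$ witness ($g\equiv 0$), you would have seen it lies in $\varepsilon$ and directly refutes the first half of the claim. The paper offers no proof here (it is ``left to the reader''), and its own Proposition \ref{P1} witness already contradicts the assertion, so this is a defect of the statement rather than only of your argument; the remaining claims ($\mathbf{C}$ and $\mathbf{R}$ are $\varepsilon$-small, and all three sets contain $\delta$-open subsets) are correctly established by your proposal.
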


\begin{proof}
\begin{enumerate}
\item[(a)]{$\mathbf{C}$ is $\varepsilon$-small: 
Pick arbitrarily $N_f \in \varepsilon$, we aim to find $N_g \in \varepsilon$, $N_g \subseteq N_f$ such that $N_g \cap \mathbf{C} = \emptyset$. 
Let
\begin{equation}\label{P2E0}
\dom^{\prime}(f):= \left\{n \in \dom(f): f(n)=0\right\}. 
\end{equation}
Observe that $\dom^{\prime}(f)$ is an infinite set.
For $k \in \dom^{\prime}(f)$,
there are $j_k, m_k \in \N$ such that $q_k=(x_{j_k}, x_{m_k})$.
Also there is $k^{\prime}\in \N$ such that $q_{k^{\prime}}=(x_{m_k}, x_{j_k})$.
Pick $k\in\dom^{\prime}(f)$ such that  $k^{\prime} > k$.
There are two cases.
\begin{enumerate}
\item[(1)]{$k^{\prime} \in \dom(f)$: 
Then $N_f \cap \mathbf{C}= \emptyset$, for both $f(k)=f(k^{\prime})=0$ and so neither $(x_j, x_m)$ nor $(x_m, x_j)$ are in any binary relation coded by any $z \in N_f$.}
\item[(2)]{$k^{\prime} \notin \dom(f)$:
Then the partial function $g: \N \rightarrow 2$ with $\dom(g):= \dom(f) \cup \{k^{\prime}\}$ is defined as:
\begin{equation}\label{P2E1}
g(n) := \left\{ 
\begin{array}{ll}
f(n) & \text{ if $n \in \dom(f)$} \\
0 &  \text{if $n=k^{\prime}$}.
\end{array}
\right. 
\end{equation}
Note that $N_g$ is a well-defined subset in $\varepsilon$ and $N_g \subseteq N_f$. 
Moreover, since $g(k)=g(k^{\prime})=0$, it follows that neither $(x_j, x_m)$ nor $(x_m, x_j)$ are in any binary relation coded by any $z \in N_g$; which gives $N_g \cap \mathbf{C} = \emptyset$.}
\end{enumerate}}
\item[(b)]{$\mathbf{R}$ is $\varepsilon$-small:
Pick arbitrarily $N_f \in \varepsilon$, we aim to find $N_g \in \varepsilon$, $N_g \subseteq N_f$ such that $N_g \cap \mathbf{R} = \emptyset$. 
Let $R$ be as  in (\ref{E1}) and  $\dom^{\prime}(f)$ as in (\ref{P2E0}).
There are two cases.
\begin{enumerate}
\item[(1)]{$\dom^{\prime}(f)\cap R\neq \emptyset$:
Let $n_k\in \dom^{\prime}(f)\cap R$. 
Then $f(n_k) =0$ and therefore $N_f \cap \mathbf{R}= \emptyset$, since $(x_k, x_k)$ is not in any binary relation coded by any $z \in N_f$.}
\item[(2)]{$\dom^{\prime}(f)\cap R= \emptyset$: 
Choose $n_l\in R$ such that $n_l\notin \dom(f)$ and define the partial function $g: \N \rightarrow 2$ with $\dom(g):= \dom(f) \cup \{n_l\}$ as:
\begin{equation}\label{P2E2}
g(n) := \left\{ 
\begin{array}{ll}
f(n) & \;\text{if}\; n \in \dom(f) \\
0 &  \;\text{if}\; n=n_l.
\end{array}
\right. 
\end{equation}
Note that $N_g$ is a well-defined subset in $\varepsilon$ and $N_g \subseteq N_f$. 
Moreover, since $g(n_l)=0$, it follows that $(x_l, x_l)$ is not in any binary relation coded by any $z \in N_g$; which gives $N_g \cap \mathbf{R}= \emptyset$.}
\end{enumerate}}
\item[(c)]{$\mathbf{S}$ is $\varepsilon$-small: 
We leave this to the reader  as a simple exercise.}
\end{enumerate}

\begin{enumerate}
\item[(A)]{$\mathbf{C}$ is not $\delta$-small: 
We need to show that there is a set $N_f \in \delta$ such that $N_f \subseteq C$. 
First, we partition $\N$ into three sets $R$,  $A$ and  $B$ as in the proof of Proposition \ref{P1}.
Define $f: \N \rightarrow 2$ such that $\dom(f):= A \cup R$ and for all $i \in \dom(f)$, $f(i)=1$. 
Through the inductive construction, every pair $(x,y) \in X \times X$ has been considered.
Either $(x,y)$ or $(y,x)$ has been added to $A \cup R$. 
As a consequence, since every $z \in N_f$ takes value $1$ for all pairs coded in $A\cup R$, it follows that every $z \in N_f$ codes a complete binary relation.}
\item[(B)]{$\mathbf{R}$ is not $\delta$-small: 
This proof is included in case (A) above.}
\item[(C)]{$\mathbf{S}$ is not $\delta$-small: 
We leave this to the reader  as a simple exercise.}
\end{enumerate}
\end{proof}

Corollary \ref{C1} follows from Propositions \ref{P1} and \ref{P2}.
\begin{corollary}\label{C1}
Let $\mathbf{L}\subseteq 2^\N$ and $\mathbf{EQ}\subseteq 2^\N$ be the coding set of all linear binary relations and equivalence relations respectively. 
Then $\mathbf{L}$  and $\mathbf{EQ}$ are $\varepsilon$-small, but they contain open subsets in $\delta$, i.e., they are  not $\delta$-small.
\end{corollary}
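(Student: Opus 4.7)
The plan is to handle the two claims of the corollary separately, in each case building directly on Propositions \ref{P1} and \ref{P2}. For the first claim, that $\mathbf{L}$ and $\mathbf{EQ}$ are $\varepsilon$-small, I would simply invoke the ideal property of $\mathcal{I}_\varepsilon$: every linear order is complete, so $\mathbf{L}\subseteq \mathbf{C}$, and every equivalence relation is reflexive, so $\mathbf{EQ}\subseteq \mathbf{R}$. Since Proposition \ref{P2} places $\mathbf{C}$ and $\mathbf{R}$ in $\mathcal{I}_\varepsilon$, closure under subsets of the ideal $\mathcal{I}_\varepsilon$ yields $\mathbf{L},\mathbf{EQ}\in \mathcal{I}_\varepsilon$, finishing this half at once.

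For the second claim, that $\mathbf{L}$ and $\mathbf{EQ}$ contain $\delta$-open subsets, my plan is to adapt the construction used to prove the $\delta$-part of Proposition \ref{P2}(A), where the partial function $f$ with $\dom(f)=A\cup R$ and $f\equiv 1$ yielded an $N_f\in \delta$ contained in $\mathbf{C}$. To land inside $\mathbf{L}=\mathbf{C}\cap \mathbf{T}\cap \mathbf{AS}$, the relation coded on $A\cup R$ must already be transitive, and the bits on $B$ must prevent any reverse direction of a non-reflexive pair from entering. Concretely, I would first choose the enumeration $\{q_k\}$ of $X\times X$ so that the inductively produced $A$ from (\ref{E1c})--(\ref{E1d}) encodes the ``forward'' part of a fixed linear order on $X$ (say $x_1<x_2<\cdots$), then declare $f$ to be $1$ on $A\cup R$ and $0$ on a carefully chosen infinite subset of $B$, leaving the remaining $B$-coordinates free. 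An analogous plan applies to $\mathbf{EQ}=\mathbf{T}\cap \mathbf{R}\cap \mathbf{S}$: pick a partition of $X$ into infinitely many equivalence classes, set $R$ and all within-class coordinates in $A\cup B$ to $1$, set across-class coordinates to $0$, and arrange an infinite set of free coordinates in a paired, symmetric fashion. Once a suitable $f$ is in hand, the inclusion $N_f\subseteq \mathbf{L}$ (resp.\ $\mathbf{EQ}$) would be checked by a case analysis on whether a coordinate lies in $R$, $A$, or $B$, mirroring the style of Propositions \ref{P1} and \ref{P2}.

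The delicate step, and where I expect the main obstacle to lie, is reconciling the infinite freedom required by $N_f\in \delta$ with the antisymmetry/\,completeness constraints for $\mathbf{L}$ and the symmetry constraints for $\mathbf{EQ}$. Any free coordinate $k\in B$ whose paired $k'\in A$ is fixed to $1$ could be set to $1$ on some $z\in N_f$, introducing both directions of a non-reflexive pair and breaking antisymmetry; any free coordinate in $R$ could be set to $0$ and then break completeness via the diagonal case. For $\mathbf{EQ}$, symmetry requires the paired $A,B$ coordinates to always agree, which is incompatible with independently freeing only one member of a pair. Overcoming these tensions forces the free coordinates of $\N\setminus \dom(f)$ to be organised into paired blocks such that, for every free $k$, either its partner is also free with the two values combinatorially forced to be coherent, or the value at $k$ is irrelevant to linearity/\,equivalence once the fixed bits are taken into account. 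Producing such an $f$ while keeping both $\dom(f)$ and $\N\setminus \dom(f)$ infinite is the technical heart of the argument; once it is in place, the verification that $N_f\subseteq \mathbf{L}$ and the analogous $N_{f'}\subseteq \mathbf{EQ}$ is a routine case check on the partition $R\cup A\cup B$.
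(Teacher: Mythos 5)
Your first paragraph is exactly the intended deduction and is correct: $\mathbf{L}\subseteq\mathbf{C}$ and $\mathbf{EQ}\subseteq\mathbf{R}$, Proposition \ref{P2} puts $\mathbf{C}$ and $\mathbf{R}$ in $\mathcal{I}_\varepsilon$, and ideals are closed under subsets. The paper offers no more than ``follows from Propositions \ref{P1} and \ref{P2}'' for the whole corollary, so for the $\varepsilon$-half you have matched it.

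The $\delta$-half of your proposal, however, cannot be completed, and the obstacle you honestly flag in your last paragraph is in fact fatal rather than merely delicate. A basic doughnut set $N_f$ places \emph{no constraint whatsoever} on the coordinates outside $\dom(f)$: every $z$ agreeing with $f$ on $\dom(f)$ belongs to $N_f$, so there is no mechanism by which two free coordinates can have their values ``combinatorially forced to be coherent.'' Now run your own case analysis to its conclusion. For $N_f\subseteq\mathbf{L}$: any free $n\in R$ admits a $z\in N_f$ with $z(n)=0$, killing completeness (via the diagonal case $x=y$); for a pair $(n_k,n'_k)\in\Gamma$ with $n_k$ free, if $n'_k$ is also free or is fixed to $1$ then some $z\in N_f$ contains both directions of a non-reflexive pair (killing antisymmetry), while if $n'_k$ is fixed to $0$ then some $z\in N_f$ contains neither direction (killing completeness). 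Hence every coordinate must lie in $\dom(f)$, contradicting $\N\setminus\dom(f)$ infinite. The same argument with symmetry in place of antisymmetry/completeness rules out $N_f\subseteq\mathbf{EQ}$ (reflexivity forces $R\subseteq\dom(f)$, and symmetry forces every $\Gamma$-pair to be fixed with equal values). Worse, this analysis can be turned around: given \emph{any} $N_f\in\delta$, one of these moves extends $f$ by finitely many points to a $g$ with $N_g\in\delta$, $N_g\subseteq N_f$ and $N_g\cap\mathbf{L}=\emptyset$ (respectively $N_g\cap\mathbf{EQ}=\emptyset$), which shows $\mathbf{L}$ and $\mathbf{EQ}$ are actually $\delta$-\emph{small}. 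So the $\delta$-claim of the corollary is not something you failed to prove; it is not provable, because the conjunction of completeness with antisymmetry (or of reflexivity with symmetry) leaves no coordinate free, unlike the single properties treated in Propositions \ref{P1} and \ref{P2}, whose $\delta$-open witnesses do not intersect. You should not expect the ``paired blocks'' idea to rescue the construction.
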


Table \ref{T2} below summarizes the results. 
None of the basic properties of binary relation is rare in doughnut topology whereas all of them are rare in the Cantor topology.
Further, only transitive or asymmetric or antisymmetric or irreflexive binary relations are not rare in Ellentuck topology. 
If we put together these three observations we can say that transitivity, asymmetry, antisymmetry and irreflexivity are the less rare properties, since they are small with respect to both $\gamma$ and $\varepsilon$, whereas symmetry, reflexivity, completeness and linearity are small with respect to only one, namely $\gamma$. 
In particular Ellentuck collections plays an important role as it allows to make a distinction between the former four properties on the one side, and the latter four properties on the other.

\begin{table}[ht]
\centering
\begin{tabular}{|l|l | c |l |l| }
\hline 
\textbf{Property}&	$\gamma$-small&	$\varepsilon$-small&	 $\delta$-small  \\
\hline\hline
Transitivity&	yes &	no		&	no	\\\hline
Asymmetry& yes &		no		&	no 	\\\hline
Antisymmetry& yes &		no		&	no 	\\\hline
Irreflexivity&	yes &		no	&		no 	\\\hline
Reflexivity&	yes &		yes		&		no 	\\\hline
Symmetry&	yes &		yes		&		no 		\\\hline
Completeness&	yes &yes	 &		no   \\\hline
Linearity&	yes &yes	 &		no  \\\hline
\end{tabular}
\caption{Basic properties of binary relations}\label{T2}
\end{table}

\begin{remark}
\emph{Propositions \ref{P1} and \ref{P2} present an interesting  feature of Ellentuck topology with regard to reflexive and irreflexive binary relations.
For $|X|=n$, the numbers of all possible reflexive and irreflexive binary relations are same, i.e., $2^{n^2-n}$.
In the Ellentuck topology, the collection of reflexive binary relations is small whereas the collection of irreflexive binary relations does not turn out to be small.
In other words, the Ellentuck topology does not reflect the similar nature of the reflexive and irreflexive binary relations observed in the finite case, which seems to be counter-intuitive.}
\end{remark}

\section{Quasi-orders, partial orders and Ellentuck topology} \label{s4}
Proposition \ref{P1} shows that the collection of all transitive binary relations does not generate a small collection of codes in the Ellentuck topology whereas  the collection of all reflexive binary relations yields a small collection of codes in the Ellentuck topology as shown in Proposition \ref{P2}.
This leads us to examine the relative size of the collection of all quasi-orders and the collection of all partial orders in greater detail.
Existing results in the literature approach this question by taking the ratio of number of all quasi-orders (denoted by $\mathbf{Q}(n)$) and the number of all partial orders (denoted by $\mathbf{P}(n)$) where the set of alternatives $X$ contains finitely many elements and then determine the value when $n$ approaches infinity.
\citet{erne1974} shows that $\mathbf{P}(n)$ and $\mathbf{Q}(n)$ are asymptotically equinumerable, i.e.,  
\begin{equation}\label{S4E1}
\frac{\mathbf{Q}(n)}{\mathbf{P}(n)} \rightarrow 1\; \text{as}\; n\rightarrow \infty.
\end{equation}
\citet{klaska1997} proves following two results on the ratio of number of all transitive binary relations  (denoted by $\mathbf{T}(n)$) and the number of all  quasi-orders and partial orders respectively.
\begin{equation*}\label{S4E2}
\frac{\mathbf{T}(n)}{2^n\mathbf{P}(n)} \rightarrow 1\; \text{as}\; n\rightarrow \infty,\; \text{and}\; \frac{\mathbf{T}(n)}{2^n\mathbf{Q}(n)} \rightarrow 1\; \text{as}\; n\rightarrow \infty.
\end{equation*}
which in particular implies
\begin{equation}\label{S4E3}
\frac{\mathbf{P}(n)}{\mathbf{T}(n)} \rightarrow 0\; \text{as}\; n\rightarrow \infty,\; \text{and}\; \frac{\mathbf{Q}(n)}{\mathbf{T}(n)} \rightarrow 0\; \text{as}\; n\rightarrow \infty.
\end{equation}
Note that (\ref{S4E3}) means that both $\mathbf{P}(n)$ and $\mathbf{Q}(n)$ are asymptotically small compared to  $\mathbf{T}(n)$.

In this section we provide some similar results also in the countably infinite case.
Since we have assumed the set of alternatives $X$ to have countably infinite elements, the relative numerosity can be captured by the notion of relative topology. 
Let 
\[
\mathbf{P}:=\left\{x\in 2^{\N}: x\;\text{is a code of a partial order on}\; X\times X\right\}, 
\]
\[
\mathbf{Q}:=\left\{x\in 2^{\N}: x\;\text{is a code of a quasi-order on}\; X\times X\right\}\;\text{and}
\]
\[
\mathbf{T}:=\left\{x\in 2^{\N}: x\;\text{is a code of a transitive binary relation on}\; X\times X\right\}.
\]
Then consider the following induced topologies.
\[
\varepsilon_{\mathbf{T}}:= \left\{B\cap \mathbf{T}: B\in\varepsilon\right\},  \gamma_{\mathbf{Q}}:= \left\{B\cap \mathbf{Q}: B\in\gamma\right\} \;\text{and}\; \varepsilon_{\mathbf{Q}}:= \left\{B\cap \mathbf{Q}: B\in\varepsilon\right\}.
\]
The idea is to estimate the smallness of a set in the induced topologies instead of the whole space $2^\mathbb{N}$ in order to get an evaluation in term of \emph{relative numerosity}. 
We report three results on the size of the collection of codes for quasi-orders $\mathbf{Q}$ and partial orders $\mathbf{P}$ in the induced topologies $\varepsilon_{\mathbf{T}}$, $\gamma_{\mathbf{Q}}$ and $\varepsilon_{\mathbf{Q}}$, which in a sense show that the Ellentuck topology better extends than Cantor topology Klaska's results (\ref{S4E3}) to the infinite case.

\begin{proposition}\label{P3}
$\mathbf{Q}$ is $\varepsilon_{\mathbf{T}}$-small.
\end{proposition}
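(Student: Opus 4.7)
The plan is to mimic the argument that $\mathbf{R}$ is $\varepsilon$-small (Proposition \ref{P2}(b)), while additionally ensuring that the shrinking step stays inside $\mathbf{T}$. Since a quasi-order is precisely a transitive reflexive relation, $\mathbf{Q} = \mathbf{T} \cap \mathbf{R}$. Consequently, to drive some $u' \subseteq u \in \varepsilon_{\mathbf{T}}$ off $\mathbf{Q}$, it suffices to break reflexivity, i.e., to force some coordinate $n_l \in R$ to value $0$. The nontrivial piece relative to Proposition \ref{P2}(b) is guaranteeing that the resulting relative basic set $N_g \cap \mathbf{T}$ remains non-empty, so that it still belongs to $\varepsilon_{\mathbf{T}}$.

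I would start from an arbitrary non-empty $N_f \cap \mathbf{T} \in \varepsilon_{\mathbf{T}}$ and split into the two cases used for Proposition \ref{P2}(b). If $\dom^{\prime}(f) \cap R \neq \emptyset$, i.e., $f$ already forces $(x_k, x_k) \notin \Re_z$ for some $k$, I take $g = f$: then $N_g \cap \mathbf{T} = N_f \cap \mathbf{T}$ is non-empty by assumption, and every $z \in N_g$ codes a non-reflexive, hence non-quasi-order relation. Otherwise, $R \cap \dom(f) \subseteq \{n \in \dom(f) : f(n) = 1\}$ is finite (since $N_f \in \varepsilon$ forces $f$ to be eventually $0$ on its domain), and only finitely many elements of $X$ can appear in the finite support of $f$; so I can choose $n_l \in R \setminus \dom(f)$ with the additional property that $x_l$ does not appear in any pair $q_n$ with $n \in \dom(f)$ and $f(n) = 1$. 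I then extend $f$ to $g$ with $\dom(g) = \dom(f) \cup \{n_l\}$ and $g(n_l) = 0$; clearly $N_g \subseteq N_f$, $N_g \in \varepsilon$, and every $z \in N_g$ has $(x_l, x_l) \notin \Re_z$, so that $N_g \cap \mathbf{T} \cap \mathbf{Q} = \emptyset$.

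The step I expect to require the most care is verifying that $N_g \cap \mathbf{T}$ is non-empty, since otherwise it does not witness membership in $\varepsilon_{\mathbf{T}}$. Here the assumption $N_f \cap \mathbf{T} \neq \emptyset$ provides a transitive $z_0 \in N_f$, and I would define $z'$ to be the code of the relation
\[
\Re' := \Re_{z_0} \setminus \bigl((\{x_l\} \times X) \cup (X \times \{x_l\})\bigr),
\]
obtained by deleting every pair that involves $x_l$. A short check shows that $\Re'$ inherits transitivity from $\Re_{z_0}$: any pairs $(a,b),(b,c) \in \Re'$ have $a,b,c \neq x_l$, so $(a,c) \in \Re_{z_0}$ by transitivity and $(a,c)$ does not involve $x_l$, whence $(a,c) \in \Re'$. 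The choice of $n_l$ guarantees that on $\dom(f)$ the modification only affects coordinates where $f$ already has value $0$, and by construction $z'(n_l) = 0 = g(n_l)$; therefore $z' \in N_g \cap \mathbf{T}$, and the construction is complete.
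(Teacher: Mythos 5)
Your proof is correct and rests on the same core idea as the paper's: since $\mathbf{Q}\subseteq\mathbf{R}$, it suffices to force a single diagonal coordinate $n_l\in R$ to the value $0$, and the Ellentuck condition (that $f$ is eventually $0$ on its domain) guarantees either that this already happens on $\dom(f)$ or that a fresh diagonal coordinate is available. The paper's own proof splits on whether $R\subseteq\dom(f)$ and, in the second case, extends $f$ by an \emph{arbitrary} $\bar m\in R\setminus\dom(f)$ with value $0$; it never checks that the resulting relative set $N_g\cap\mathbf{T}$ is non-empty, which the definition of $\mathcal{I}_U$-smallness requires. Your version supplies exactly this missing verification, and it genuinely needs your extra care in choosing $n_l$: with an arbitrary $\bar m$ as in the paper, $f$ could already force pairs $(x_l,a)$ and $(a,x_l)$ into every coded relation, in which case transitivity would force $(x_l,x_l)$ as well and $N_g\cap\mathbf{T}$ would be empty. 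Your requirement that $x_l$ not occur in any pair $q_n$ with $n\in\dom(f)$ and $f(n)=1$ (possible because only finitely many coordinates are forced to $1$), together with the deletion argument producing a transitive $z'\in N_g$, closes this gap cleanly. So the two arguments buy the same conclusion, but yours is the more complete one; the only cost is the extra page of bookkeeping around the choice of $n_l$ and the witness $z'$.
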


\begin{proof}
We pick $N_f\in \varepsilon_{\mathbf{T}}$ arbitrarily and show that there exists $N_g\subseteq N_f$ such that $N_g\in \varepsilon_{\mathbf{T}}$ and $N_g\cap {\mathbf{Q}}=\emptyset$.

\begin{enumerate}
\item[(1)]{We use $R\subset\N$ defined in (\ref{E1}).
If $R\subseteq \dom(f)$, then there is nothing to prove since 
\[
\forall z\in \mathbf{Q}, \forall m\in R, z(m)=1, 
\]
whereas
\[
\forall z\in N_f, \exists k\in\N,\; \forall m\geq k, m\in \dom(f), z(m)= f(m)=0.
\]
Thus $\mathbf{Q}\cap N_f=\emptyset$.}
\item[(2)]{If $R\not\subseteq \dom(f)$, i.e., there exists a $\bar{m}\in R$ such that $\bar{m}\notin\dom(f)$.
Define partial function $g:\N\rightarrow 2$ with $\dom(g):= \dom(f) \cup \{\bar{m}\}$ as:
\begin{equation}\label{P3E2}
g(n) := \left\{ 
\begin{array}{ll}
f(n) & \;\text{if}\; n \in \dom(f) \\
0 &  \;\text{if}\; n=\bar{m}.
\end{array}
\right. 
\end{equation}
Note that $N_g$ is a well-defined subset in $\varepsilon_{\mathbf{T}}$ and $N_g \subseteq N_f$. 
Since
\[
\forall m\in R, \forall z\in \mathbf{Q}, z(m)=1, \text{whereas} \; \forall z\in N_g,  z(\bar{m})= g(\bar{m})=0,
\]
it follows $\mathbf{Q}\cap N_g=\emptyset$.}
\end{enumerate}
\end{proof}

Next, we show that the collection of partial orders $\mathbf{P}$ is small in $\gamma_{\mathbf{Q}}$.
However, it is not small in $\varepsilon_{\mathbf{Q}}$.

\begin{proposition}\label{P4}
$\mathbf{P}$ is  $\gamma_{\mathbf{Q}}$-small.
\end{proposition}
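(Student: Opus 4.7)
The plan is to show that every non-empty basic open set $N_f \cap \mathbf{Q} \in \gamma_{\mathbf{Q}}$ admits a non-empty refinement $N_g \cap \mathbf{Q} \in \gamma_{\mathbf{Q}}$ with $N_g \cap \mathbf{Q} \subseteq N_f \cap \mathbf{Q}$ and $(N_g \cap \mathbf{Q}) \cap \mathbf{P} = \emptyset$. The strategy is to extend $f$ so as to force two mutually reversed non-reflexive pairs into every coded relation, destroying antisymmetry while leaving enough freedom for quasi-orders to exist.

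First I would let $F := \dom(f)$ and let $E \subseteq X$ denote the (finite) set of elements of $X$ appearing in some pair $q_k$ with $k \in F$. Since $E$ is finite, I can pick distinct $x_{i_0}, x_{j_0} \in X \setminus E$ and take $k_1, k_2 \in \N$ with $q_{k_1} = (x_{i_0}, x_{j_0})$ and $q_{k_2} = (x_{j_0}, x_{i_0})$; note $k_1, k_2 \notin F$ by choice of $x_{i_0}, x_{j_0}$. I would then define $g$ as the extension of $f$ with $\dom(g) = F \cup \{k_1, k_2\}$ and $g(k_1) = g(k_2) = 1$. Then $N_g \in \gamma$, $N_g \subseteq N_f$, and every $z \in N_g$ has $z(k_1) = z(k_2) = 1$, so $\Re_z$ contains both $(x_{i_0}, x_{j_0})$ and $(x_{j_0}, x_{i_0})$ with $x_{i_0} \neq x_{j_0}$, hence fails antisymmetry. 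Thus $N_g \cap \mathbf{P} = \emptyset$, and consequently $(N_g \cap \mathbf{Q}) \cap \mathbf{P} = \emptyset$.

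The main obstacle is verifying $N_g \cap \mathbf{Q} \neq \emptyset$, i.e., exhibiting a quasi-order that agrees with $f$ on $F$ \emph{and} contains both $(x_{i_0}, x_{j_0})$ and $(x_{j_0}, x_{i_0})$. The naive recipe---start from some $z_0 \in N_f \cap \mathbf{Q}$, adjoin the two new pairs, and take the transitive closure---may fail, because if $\Re_{z_0}$ contains $(y, x_{i_0})$ and $(x_{j_0}, z)$ for some $y, z \in E$, the closure would include $(y, z)$, a pair potentially forbidden by $f$. To avoid this I would first ``isolate'' $x_{i_0}$ and $x_{j_0}$: given any $z_0 \in N_f \cap \mathbf{Q}$ with $\Re_0 := \Re_{z_0}$, define
\[
\Re_0' := \{(x, y) \in \Re_0 : x, y \notin \{x_{i_0}, x_{j_0}\}\} \cup \{(x_{i_0}, x_{i_0}), (x_{j_0}, x_{j_0})\}.
\]
Because $x_{i_0}, x_{j_0} \notin E$, every pair indexed by $F$ lies in $E \times E$, so deleting pairs incident to $x_{i_0}$ or $x_{j_0}$ preserves agreement with $f$ on $F$; reflexivity is immediate, and transitivity of $\Re_0'$ follows from that of $\Re_0$ since any composition in $\Re_0'$ either avoids $\{x_{i_0}, x_{j_0}\}$ altogether or passes trivially through a reflexive loop at $x_{i_0}$ or $x_{j_0}$.

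Finally I would set $\Re := \Re_0' \cup \{(x_{i_0}, x_{j_0}), (x_{j_0}, x_{i_0})\}$. Reflexivity is clear. For transitivity, any composition using one of the two new pairs must have its ``other endpoint'' among $\{x_{i_0}, x_{j_0}\}$ (since these two points are isolated in $\Re_0'$ apart from reflexive loops), producing a pair among $(x_{i_0}, x_{i_0})$, $(x_{j_0}, x_{j_0})$, $(x_{i_0}, x_{j_0})$, $(x_{j_0}, x_{i_0})$, each of which already lies in $\Re$. Hence $\Re$ is a quasi-order that agrees with $f$ on $F$ and satisfies $z_{\Re}(k_1) = z_{\Re}(k_2) = 1$, so $z_{\Re} \in N_g \cap \mathbf{Q}$, completing the argument.
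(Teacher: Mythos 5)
Your argument is correct, and its central move coincides with the paper's: shrink the given basic set by forcing both $(x,y)$ and $(y,x)$, for distinct $x,y$, into every coded relation, so that antisymmetry---and hence membership in $\mathbf{P}$---fails on the refinement. Where you differ is in how the reversed pair is chosen and in what you then verify. The paper picks $(m,m^{\prime})\in\Gamma$ subject only to $m,m^{\prime}$ exceeding every index in $\dom(f)$, sets $g(m)=g(m^{\prime})=1$, and concludes $N_g\cap\mathbf{P}=\emptyset$; it never checks that $N_g\cap\mathbf{Q}$ is non-empty, even though the paper's own definition of $U$-smallness demands a \emph{non-empty} refining set. That check is not automatic with the paper's choice: if, say, $f$ forces $(x_1,x_2)$ in and $(x_1,x_3)$ out, and the chosen $\Gamma$-pair happens to code $(x_2,x_3)$ and $(x_3,x_2)$, then transitivity forces $(x_1,x_3)$ into every quasi-order in $N_g$, so $N_g\cap\mathbf{Q}=\emptyset$ and the refinement is useless. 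Your version repairs exactly this point: by placing the symmetric pair on two fresh alternatives outside the finite set $E$ of elements touched by $\dom(f)$, and by first isolating those alternatives in a witness $z_0\in N_f\cap\mathbf{Q}$ before adjoining the two reversed pairs, you exhibit an explicit quasi-order in $N_g\cap\mathbf{Q}$. So your route buys the non-emptiness of the refinement, which the published proof leaves unaddressed and which, with the published choice of pair, can genuinely fail; the only added cost is the short isolation-and-transitive-closure argument, which you carry out correctly.
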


\begin{proof}
We need to show that for every $N_f\in \gamma_{\mathbf{Q}}$ there exists $N_g\in \gamma_{\mathbf{Q}}$,   $N_g\subseteq N_f$  and $N_g\cap \mathbf{P}=\emptyset$.
Pick $N_f\in \gamma_{\mathbf{Q}}$ arbitrarily.
Since $\mathbf{P}$ is antisymmetric ($\mathbf{P}\supseteq \mathbf{AS}$), given any pair $\left(m, m^{\prime}\right)\in \Gamma$ (where $\Gamma$ is defined in (\ref{E1f})) we have that for all $z\in \mathbf{P}$,
\[
z(m)=1\Leftrightarrow z(m^{\prime})=0.
\]
Pick $\left(m, m^{\prime}\right)\in \Gamma$, such that $m, m^{\prime}>n$ for all $n\in \dom(f)$, which is always possible since $|\dom(f)|$ is finite in $\gamma$-topology.
Then define partial function $g:\N\rightarrow 2$ with $\dom(g):= \dom(f) \cup \{m, m^{\prime}\}$ as:
\[
g(n) := \left\{ 
\begin{array}{ll}
f(n) & \;\text{if}\; n \in \dom(f) \\
1 &  \;\text{if}\; n\in \{m, m^{\prime}\}.
\end{array}
\right. 
\]
Note that $N_g$ is a well-defined subset in $\gamma_\mathbf{Q}$ and $N_g \subseteq N_f$. 
Since
\[
\forall z\in \mathbf{P}, z(m)=1\Leftrightarrow z(m^{\prime})=0\;\text{and}
\] 
\[
\forall z\in N_g,  z(m)= g(m)= z(m^{\prime})= g(m^{\prime})=1,
\]
we get $\mathbf{P}\cap N_g=\emptyset$.

\end{proof}

\begin{proposition}\label{P5}
$\mathbf{P}$ is not $\varepsilon_{\mathbf{Q}}$-small.
\end{proposition}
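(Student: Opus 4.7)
The plan is to exhibit a single basic open $N_f \in \varepsilon$ such that $N_f \cap \mathbf{Q}$ is a one-point subset of $\mathbf{P}$; this immediately prevents any non-empty relative open in $\varepsilon_\mathbf{Q}$ sitting inside $N_f \cap \mathbf{Q}$ from avoiding $\mathbf{P}$. Concretely, I would define the partial function $f: \N \to 2$ by $\dom(f) := A \cup B = \N \setminus R$ (using the partition from (\ref{E1}) and (\ref{E1e})) and $f(n) := 0$ for every $n \in \dom(f)$. First I check that $N_f \in \varepsilon$: both $\dom(f)$ and its complement $R$ are infinite, and the trailing-zero clause is immediate because $f$ is identically $0$.

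The heart of the argument is to identify $N_f \cap \mathbf{Q}$ precisely. For every $z \in N_f$ one has $z(n) = 0$ throughout $\N \setminus R$, so the coded relation $\Re_z$ contains no pair $(x_j, x_m)$ with $j \neq m$; equivalently, $\Re_z \subseteq \{(x_n, x_n) : n \in \N\}$. If in addition $z \in \mathbf{Q}$, reflexivity forces the reverse inclusion, so $\Re_z$ equals the full diagonal and $z$ coincides with the unique code $z_\ast$ of that relation. Hence $N_f \cap \mathbf{Q} = \{z_\ast\}$, and since the diagonal is plainly reflexive, transitive and antisymmetric, $z_\ast \in \mathbf{P}$.

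Consequently any non-empty $N_g \cap \mathbf{Q} \in \varepsilon_\mathbf{Q}$ with $N_g \cap \mathbf{Q} \subseteq N_f \cap \mathbf{Q}$ must coincide with $\{z_\ast\}$ and hence meets $\mathbf{P}$, which is exactly the negation of $\varepsilon_\mathbf{Q}$-smallness for $\mathbf{P}$. No real obstacle arises: the only subtle point to verify is that the Ellentuck trailing-zero condition is compatible with fixing every off-diagonal entry to $0$, and once that is noted the construction collapses the only quasi-order available inside $N_f$ down to the trivial diagonal partial order.
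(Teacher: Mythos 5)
Your proof is correct, and it follows the same overall strategy as the paper: exhibit a basic Ellentuck set $N_f$ with $N_f \cap \mathbf{Q}$ non-empty and contained in $\mathbf{P}$, so that no non-empty relative open inside it can avoid $\mathbf{P}$. The difference lies in the choice of witness. The paper puts $\dom(f) = A$ and $f \equiv 0$, i.e.\ it kills exactly one coordinate from each symmetric off-diagonal pair $(m,m^{\prime}) \in \Gamma$; this forces antisymmetry while leaving the $B$- and $R$-coordinates free, so $N_f \cap \mathbf{Q}$ remains a rich family of partial orders. You instead take $\dom(f) = A \cup B = \N \setminus R$, which kills every off-diagonal coordinate and collapses $N_f \cap \mathbf{Q}$ to the single code $z_\ast$ of the diagonal order. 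Your verification is sound: $N_f \in \varepsilon$, reflexivity pins down $z_\ast$ as the only quasi-order in $N_f$, and $z_\ast \in \mathbf{P}$, so the negation of $\varepsilon_{\mathbf{Q}}$-smallness follows from the definition of the ideal $\mathcal{I}_U$. What your version buys is brevity; what it costs is that it rests entirely on the degenerate fact that the induced collection $\varepsilon_{\mathbf{Q}}$ contains a singleton lying in $\mathbf{P}$, which says nothing about the abundance of partial orders among quasi-orders --- the phenomenon the proposition is meant to capture and to contrast with Proposition \ref{P4}. The paper's witness, by exhibiting a genuinely large trace $N_f \cap \mathbf{Q} \subseteq \mathbf{P}$, carries that extra information. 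If you keep your argument, you should at least remark that the singleton is non-empty and is itself an element of $\varepsilon_{\mathbf{Q}}$, since both facts are used when negating the smallness condition.
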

\begin{proof}
It is sufficient to prove that there exists $N_f \in \varepsilon_{\mathbf{Q}}$ such that every $x \in N_f$ codes a partial order on $X$. 
We use the set $\Gamma$ (where $\Gamma$ is defined in (\ref{E1f}))  and define  $f: \N \rightarrow 2$ as follows.
For all $\left(m, m^{\prime}\right)\in \Gamma$, we set
\[
f(m)=0,\;\text{and}\; m^{\prime}\notin \dom(f).
\]
Thus $N_f\in \varepsilon$ and hence $N_f\cap \mathbf{Q}\in \varepsilon_{\mathbf{Q}}$.
For all $z\in N_f\cap \mathbf{Q}$ and all $(m,m') \in \Gamma$, $z(m)=0$. 
Hence $z$ is a code of antisymmetric quasi-order, i.e., $N_f\cap \mathbf{Q}\subseteq \mathbf{P}$.
\end{proof}

\section{Egalitarian binary relations} \label{s5}

In this section we consider  anonymity- a procedural equity principle;  strong equity - a consequentialist equity principle and the Pareto axiom- an efficiency principle.
The utility space $X$ we consider here is countable. 
Take for instance $X:= Y^N$, for $N \in \N$ and $Y$ is any countable subset of $[0,1]$.
One could choose $Y:= \left\{ \frac{1}{n}: n \in \N \right\}$ or $Y:=\QQ \cap [0,1]$ for example.

\begin{proposition} \label{P6}
Let $\mathbf{AN} \subseteq 2^\N$ be the coding set of all anonymous binary relations on $X$. 
Then $\mathbf{AN}$ consists of a $\gamma$- and $\varepsilon$-small set, but it contains an $N_f \in \delta$, and so it is not $\delta$-small.
\end{proposition}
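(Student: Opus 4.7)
The proof naturally splits into three parts, and I would build each on a single auxiliary set. Define
\[
W := \{k \in \N : q_k = (\mathbf{t}, \mathbf{t}') \text{ with } \mathbf{t} \sim_a \mathbf{t}'\},
\]
so that $W$ collects the indices of all swap-related pairs (including the degenerate case $i = j$, in which $\mathbf{t} = \mathbf{t}'$ and which encodes the reflexive pairs implied by the anonymity definition). First I would check that $W$ and $\N \setminus W$ are both infinite: $W$ is infinite because $Y$ has at least two elements and so $Y^N$ contains infinitely many $\mathbf{t}$ with $\mathbf{t}(i) \neq \mathbf{t}(j)$, giving infinitely many nontrivial swap pairs; $\N \setminus W$ is infinite because pairs $(\mathbf{t}, \mathbf{t}')$ differing in at least three coordinates are never swap-related. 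The key reformulation is then: $z \in \mathbf{AN}$ if and only if $z(k) = 1$ for every $k \in W$.

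For the $\gamma$- and $\varepsilon$-smallness, I would mimic the pattern of Proposition~\ref{P2}(a,b). Pick $N_f$ arbitrarily in the collection at hand. If there is some $k \in \dom(f) \cap W$ with $f(k) = 0$, then $N_f \cap \mathbf{AN} = \emptyset$ and we are done taking $N_g := N_f$. Otherwise, select $\bar{m} \in W \setminus \dom(f)$ and define $g$ extending $f$ by $g(\bar{m}) := 0$. In the $\gamma$-case, $\dom(f)$ is finite while $W$ is infinite, so such $\bar{m}$ exists immediately. In the $\varepsilon$-case, the assumption that every $k \in \dom(f) \cap W$ carries value $1$, combined with the Ellentuck threshold $k_0$ (beyond which $f \restric \dom(f)$ is identically zero), forces $\dom(f) \cap W \subseteq [0, k_0)$; hence $W \setminus \dom(f)$ contains elements above $k_0$, and picking $\bar{m}$ in this range keeps $g$ in $\varepsilon$. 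In both subcases $N_g \subseteq N_f$ and $g(\bar{m}) = 0$ excludes the swap pair $q_{\bar{m}}$ from every $\Re_z$ with $z \in N_g$, contradicting anonymity.

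For the failure of $\delta$-smallness, I would explicitly produce $f \colon \N \to 2$ with $\dom(f) := W$ and $f(k) := 1$ for all $k \in W$. Since $W$ and $\N \setminus W$ are both infinite, $N_f \in \delta$. For every $z \in N_f$ and every $k \in W$ one has $z(k) = f(k) = 1$, so the relation $\Re_z$ contains every swap-related pair and is therefore anonymous. Hence $N_f \subseteq \mathbf{AN}$, proving that $\mathbf{AN}$ is not $\delta$-small.

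The main obstacle is essentially combinatorial bookkeeping: isolating $W$ cleanly and verifying its infinite, co-infinite nature under the mild standing hypothesis that $N \geq 2$ and $|Y| \geq 2$, and handling the eventual-zero condition of $\varepsilon$ when extending $f$ to $g$. The rest of the argument is a direct adaptation of the strategies used for completeness and reflexivity in Proposition~\ref{P2}.
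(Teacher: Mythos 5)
Your proposal is correct and follows essentially the same route as the paper: your set $W$ is exactly the paper's $\{n\in\N: z(n)=1\}$ where $z$ codes the relation $\sim_a$, the non-$\delta$-smallness is witnessed by the same $N_f$ with $\dom(f)=W$ and constant value $1$, and the $\varepsilon$-smallness argument is the same two-case extension of $f$ by one value $0$ at an index in $W$. The only (harmless) differences are cosmetic: the paper does not bother restricting $\bar m$ to lie above the Ellentuck threshold (adding a point with value $0$ keeps $g\in\varepsilon$ regardless), and your justification that $W$ and $\N\setminus W$ are infinite is slightly more explicit than the paper's bare assertion.
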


\begin{proof}
First we prove that there is $N_f \in \delta$ such that $N_f  \subseteq \mathbf{AN}$. 
Let $\{ e_n: n \in \N \}$ enumerate all streams in $Y^N$. 
Let $z \in 2^\N$ be defined as: for all $s, t \in Y^N$ $((s,t) \in \Re_z \Leftrightarrow s \sim_a t)$. 
Note that both $\{n \in \N: z(n)=1\}$ and $\{n \in \N: z(n)=0 \}$ are infinite. 
Finally let $f: \N \rightarrow 2$ be the function such that $\dom(f):= \{n \in \N: z(n)=1\}$ (in particular, $\forall n \in \dom(f) (f(n)=1)$).
Then note that every $x \in 2^\N$ is a code for an anonymous binary relation if it satisfies $x(n)=1$ for every $n \in \dom(f)$. 
Hence, every $x \in N_f$ codes an anonymous binary relation, i.e.,  $N_f \subseteq \mathbf{AN}$.

Next we prove that $\mathbf{AN}$ is $\varepsilon$-small, and notice that a similar (and actually simpler argument) shows $\mathbf{AN}$ is $\gamma$-small as well.  
Fix arbitrarily an element $N_f \in \varepsilon$.
Let $k \in \N$ large enough so that for all $n \geq k$, if $n \in \dom(f)$ then $f(n)=0$. 
Now pick $m \geq k$ such that $z(m)=1$.
As in the proof of Proposition \ref{P2}, we then distinguish two cases. 
\begin{enumerate}
\item[(1)]{If $m \in \dom(f)$, then $N_f \cap \mathbf{AN}= \emptyset$.}
\item[(2)]{If $m \notin \dom(f)$, then define the partial function $g: \N \rightarrow 2$ with $\dom(g):= \dom(f) \cup \{ m \}$ as:
\[
g(i) := \left\{ 
\begin{array}{ll}
f(i) & \text{ if $i \in \dom(f)$} \\
0 &  \text{ if $i=m$}.
\end{array}
\right. 
\]
Note that $N_g \in \varepsilon$ and $N_g \subseteq N_f$. 
Moreover, since $g(m)=0$, it follows that for any binary relation coded by any $z \in N_g$ we can find a pair $(t, t^{\prime})$ such that $t \sim_a t^{\prime}$ but $(t, t^{\prime})$ is not in the binary relation coded by $z$; which gives $N_g \cap \mathbf{AN} = \emptyset$.}
\end{enumerate} 
\end{proof}

Similar result holds for Paretian binary relation.  

\begin{proposition} \label{P7}
Let $\mathbf{PA} \subseteq 2^\N$ be the coding set of all Paretian binary relations on $X$. 
Then $\mathbf{PA}$ consists of a $\gamma$- and $\varepsilon$-small set, but it contains an $N_f \in \delta$.
\end{proposition}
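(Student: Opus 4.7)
The plan is to mimic the structure of the proof of Proposition \ref{P6}, exploiting the analogy between the anonymity constraint $\mathbf{t}\sim_a \mathbf{t}^\prime$ and the Pareto constraint $\mathbf{t}<_p\mathbf{t}^\prime$. The key observation is that being Paretian imposes, for each pair of streams $\mathbf{t},\mathbf{t}^\prime\in Y^N$ with $\mathbf{t}<_p\mathbf{t}^\prime$, both a forced inclusion (the pair $(\mathbf{t}^\prime,\mathbf{t})$ must belong to $\Re$) and a forced exclusion (the pair $(\mathbf{t},\mathbf{t}^\prime)$ must not belong to $\Re$). Since $Y\subseteq[0,1]$ is countable with at least two elements, there are infinitely many Pareto-comparable pairs and infinitely many Pareto-incomparable pairs of elements of $X=Y^N$.

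For the $\delta$-part, I would partition $\N$ according to the status of the coded pair. Let $D_1:=\{k\in\N:q_k=(\mathbf{t}^\prime,\mathbf{t})\text{ for some }\mathbf{t}<_p\mathbf{t}^\prime\}$ and $D_0:=\{k\in\N:q_k=(\mathbf{t},\mathbf{t}^\prime)\text{ for some }\mathbf{t}<_p\mathbf{t}^\prime\}$. These two sets are disjoint and both infinite, while $\N\setminus(D_0\cup D_1)$ is also infinite because it contains the indices of all pairs of Pareto-incomparable streams and of pairs $(\mathbf{t},\mathbf{t})$. Define $f:\N\rightarrow 2$ with $\dom(f):=D_0\cup D_1$, putting $f(k)=1$ on $D_1$ and $f(k)=0$ on $D_0$. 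Then $N_f\in\delta$ by construction, and every $z\in N_f$ codes a Paretian binary relation because all forced inclusions and all forced exclusions dictated by $<_p$ are already fixed by $f$; the coordinates outside $\dom(f)$ are free, so each such $z$ extends the Pareto ranking in some way but preserves the Pareto axiom.

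For the $\varepsilon$-small part, I would proceed exactly as in Proposition \ref{P6}. Fix $N_f\in\varepsilon$, and choose $k\in\N$ such that $f(n)=0$ for every $n\in\dom(f)$ with $n\geq k$. Because $D_1$ is infinite, I can pick $m\geq k$ with $m\in D_1$, i.e., $q_m=(\mathbf{t}^\prime,\mathbf{t})$ for a Pareto-comparable pair $\mathbf{t}<_p\mathbf{t}^\prime$. If $m\in\dom(f)$, then $f(m)=0$ and any $z\in N_f$ omits $(\mathbf{t}^\prime,\mathbf{t})$ from $\Re_z$, violating the Pareto axiom, so $N_f\cap\mathbf{PA}=\emptyset$. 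Otherwise define $g$ on $\dom(f)\cup\{m\}$ by $g\restric\dom(f)=f$ and $g(m)=0$; then $N_g\in\varepsilon$, $N_g\subseteq N_f$, and $N_g\cap\mathbf{PA}=\emptyset$ for the same reason. The $\gamma$-small case is a simpler variant: since $\dom(f)$ is finite in the Cantor collection, any $m\in D_1$ outside $\dom(f)$ can be used, with no need to impose the tail-zero condition.

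The only mildly delicate point is confirming that both $D_1$ and the complement $\N\setminus(D_0\cup D_1)$ are infinite, so as to guarantee $N_f\in\delta$ in the first part and to find infinitely many available indices $m\geq k$ in the second part; this is where the assumption that $Y$ has at least two elements and $N\geq 1$ (or the standard choices like $Y=\QQ\cap[0,1]$ with $N\geq 2$) is used. Beyond that, the argument is the Pareto analogue of the anonymity construction, so no new obstacle is expected.
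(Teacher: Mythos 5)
Your proposal is correct and follows essentially the same route as the paper: the sets $D_1$ and $D_0$ are exactly the sets $A$ and $B$ that the paper builds by recursion (indices of forced inclusions and forced exclusions under $<_p$), the $\delta$-witness $N_f$ is the same, and the $\varepsilon$- and $\gamma$-smallness arguments are the same one-coordinate extension used for anonymity in Proposition \ref{P6}. The only difference is presentational (direct definition of $D_0,D_1$ versus the paper's recursion, and your explicit write-up of the $\varepsilon$-small step, which the paper leaves as ``follows the same lines'').
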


\begin{proof}
The proof is a similar argument as in Proposition \ref{P3}. 
First, we prove that there exists $N_f \in \delta$ such that $N_f \subseteq \mathbf{PA}$. 
Define $A(n)\subset\N$ and $B(n)\subset(\N\setminus R)$ (where $R$ is as defined  in \ref{E1}) recursively as follows:
\begin{itemize}
\item Start from $q_1=(x_{j_1}, x_{m_1})$. 
Let $k(1)\in \N$ be such that $q_{k(1)}=(x_{m_1}, x_{j_1})$.
If $(x_{j_1}, x_{m_1})$ satisfies the Pareto condition, then put $A(1):= \{1\}$ and $B(1):= \{k(1)\}$.
Otherwise let $A(1)= \emptyset$ and $B(1)= \emptyset$.
\item Assume $A(n-1)$ and $B(n-1)$ have been defined for $n\geq 2$ and pick $q_n= (x_{j_n}, x_{m_n})$; if $(x_{j_n}, x_{m_n})$ satisfies the Pareto condition, then put $A(n):= A(n-1) \cup  \{ n \}$ and $B(n):= B(n-1) \cup  \{ k(n) \}$; otherwise let  $A(n) = A(n-1)$ and $B(n):= B(n-1)$.
\end{itemize}
Finally put $A:= \bigcup_{n \in \N} A(n)$ and $B:= \bigcup_{n \in \N} B(n)$.
By construction, both $A$ and $B$ are infinite.
Set $A$ enumerates all pairs of alternatives $(x_i, x_l)$ such that $x_i<_p x_l$.
For each element in set $A$, set $B$ enumerates all the corresponding opposite pairs of alternatives $(x_l, x_i)$ such that $x_l<_p x_i$.
Since $R$ is infinite, the complement of $A\cup B$ is also infinite.
Then define the partial function $f: \N \rightarrow 2$ with $\dom(f):= A \cup B$ as: 
\[
f(n) := \left\{ 
\begin{array}{ll}
1 & \;\text{if}\; n \in A \\
0 & \;\text{if}\; n \in B.
\end{array}
\right. 
\]
We claim that every $z \in N_f$ codes a Paretian binary relation. 
In fact, if $n \in N$ is such that the pair $(x_{j_n}, x_{m_n})$ satisfies the Pareto axiom, then $n \in A$ and the related $k(n) \in B$ and therefore $z(n)=f(n)=1$ (which means $(x_{j_n}, x_{m_n}) \in \Re_z$) and $z(k(n))=f(k(n))=0$ (which means $(x_{m_n}, x_{j_n}) \notin \Re_z$).  

The proof to show that $\mathbf{PA}$ is $\varepsilon$-small follows the same line as for Proposition \ref{P3}.
\end{proof}

Similar result holds for binary relations satisfying strong equity. 
The proof works exactly following the same lines as for Proposition \ref{P4} and we leave it to the reader. 

\begin{proposition} \label{P8}
Let $\mathbf{SE} \subseteq 2^\N$ be the coding set of all strong equity binary relations on $X$. 
Then $\mathbf{SE}$ consists of a $\gamma$- and $\varepsilon$-small set, but it contains an $N_f \in \delta$.
\end{proposition}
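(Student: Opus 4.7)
The plan is to copy the structure of Proposition \ref{P7} almost verbatim, with the Pareto condition $<_p$ replaced throughout by the strong equity condition $<_s$. The key feature that makes the argument transfer is that, exactly as in the Paretian case, strong equity behaves as a ``directional'' requirement: whenever $\mathbf{t}<_s\mathbf{t}'$ holds, the relation is forced to contain the ordered pair $(\mathbf{t}',\mathbf{t})$ and to omit $(\mathbf{t},\mathbf{t}')$, so every strong-equity instance locks in one coding index at value $1$ and the dual index at value $0$.

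For the $\delta$-part, I would build disjoint subsets $A,B\subseteq \N$ recursively, mirroring the construction in Proposition \ref{P7}. At stage $n$, given $q_n=(x_{j_n},x_{m_n})$, let $k(n)$ be the index with $q_{k(n)}=(x_{m_n},x_{j_n})$; if $x_{j_n}<_s x_{m_n}$, put $n\in B$ and $k(n)\in A$; if $x_{m_n}<_s x_{j_n}$, put $n\in A$ and $k(n)\in B$; otherwise add neither. Infiniteness of $A$ and $B$ follows because $Y$ is infinite and linearly ordered, so for any fixed pair of distinct coordinates $i,j\le N$ one gets infinitely many quadruples $a<b<c<d$ in $Y$ producing strong-equity pairs; and $\N\setminus(A\cup B)\supseteq R$ is also infinite. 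Define $f:\dom(f)=A\cup B\to 2$ by $f(n)=1$ on $A$ and $f(n)=0$ on $B$. Then $N_f\in\delta$, and by construction every $z\in N_f$ encodes a binary relation satisfying strong equity.

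For $\varepsilon$-smallness, I would fix an arbitrary $N_f\in\varepsilon$ and choose $k\in\N$ such that $f(n)=0$ for all $n\in\dom(f)$ with $n\ge k$. Using the same abundance of strong-equity pairs noted above, pick $m\ge k$ with $q_m=(\mathbf{t}',\mathbf{t})$ where $\mathbf{t}<_s\mathbf{t}'$, so that strong equity forces $z(m)=1$ on any $z\in\mathbf{SE}$. If $m\in\dom(f)$ then $f(m)=0$ already gives $N_f\cap\mathbf{SE}=\emptyset$. Otherwise extend $f$ to $g$ with $\dom(g)=\dom(f)\cup\{m\}$ and $g(m)=0$; then $N_g\in\varepsilon$, $N_g\subseteq N_f$, and $N_g\cap\mathbf{SE}=\emptyset$. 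The $\gamma$-argument is strictly easier because $\dom(f)$ is finite, so a suitable $m$ outside $\dom(f)$ witnessing a strong-equity pair can always be found directly.

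The only point requiring a little care, and the main (though minor) obstacle compared to Proposition \ref{P7}, is verifying that strong-equity pairs are plentiful in $X=Y^N$: one needs four strictly ordered values in $Y$, which is guaranteed by the standing choices $Y=\{1/n:n\in\N\}$ or $Y=\QQ\cap[0,1]$. Once this combinatorial input is in place, the recursive construction of $A,B$ and the extension argument for $\varepsilon$ go through exactly as in Proposition \ref{P7}, so the proof can reasonably be left to the reader as the paper indicates.
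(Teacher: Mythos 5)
Your proposal is correct and is exactly the argument the paper intends: the paper explicitly leaves this proof to the reader as a transcription of the Paretian case (Proposition \ref{P7}), and you carry out that transcription faithfully, including the one genuinely new point --- checking that strong-equity instances $\mathbf{t}<_s\mathbf{t}'$ occur infinitely often in $X=Y^N$, which needs $N\geq 2$ and four strictly ordered values in $Y$. No gaps; the directional forcing of $(\mathbf{t}',\mathbf{t})$ in and $(\mathbf{t},\mathbf{t}')$ out is precisely what makes the $A$/$B$ construction and the $\varepsilon$-extension argument go through unchanged.
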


Table \ref{T3} summarizes these results. 

\begin{table}[ht] 
\centering
\begin{tabular}{|l|l | c |l |l| }
\hline 
\textbf{Property}&	$\gamma$-small&	$\varepsilon$-small&	 $\delta$-small   \\
\hline\hline
Anonymity&	yes &		yes&		no 	\\\hline
Paretian&	yes &		yes	&		no 	\\\hline
Strong equity &	yes &		yes	&		no \\\hline
\end{tabular}
\caption{Equity and efficiency properties of preference relations}\label{T3}
\end{table}

\section{Concluding remarks}\label{s6}

In this paper, we have used the Ellentuck and doughnut topologies  (from the branch of descriptive set theory in the mathematical logic literature) to investigate the rarity of binary relations endowed with useful basic features (transitive, asymmetric, etc.).
Propositions \ref{P1} and \ref{P2} show that these binary relations are not rare in the finest (doughnut) topology.
The Ellentuck topology yields mixed results.
Transitive, asymmetric or antisymmetric binary relations are not rare whereas complete, reflexive or symmetric binary relations are rare in Ellentuck topology.
These results lead to a better understanding of the  distinct nature of Cantor topology compared to the Ellentuck and doughnut topologies.
The next results, Propositions \ref{P3}-\ref{P5}, show that Ellentuck topology better extends than Cantor topology the relative numerosity of partial and quasi-orders over transitive binary relations. 
Finally, Propositions \ref{P6}-\ref{P8} on the equitable or Paretian binary relations show that none of them are rare in doughnut topology.

In future research, we intend to study the pervasiveness or rarity of the binary relations endowed with desirable features (basic properties, equity or efficiency axioms) on the set of alternatives $X$ containing uncountably many elements using analytical tools from the generalized descriptive set theory.

\newpage
\bibliographystyle{plainnat}
\setlength{\bibsep}{0pt}
\small{\bibliography{APAnonymity}}
\end{document}